\theoremstyle{plain}
\newtheorem{theorem}{Theorem}[section]
\newtheorem{proposition}[theorem]{Proposition}
\newtheorem{remark}[theorem]{Remark}
\numberwithin{equation}{section}
\def\d{{\rm d}}
\def\i{{\rm i}}
\def\CP{\mathbb{CP}}
\begin{document}

\title{\bf Hidden symmetries of generalised gravitational instantons}

\author{Bernardo Araneda\footnote{Email: \texttt{bernardo.araneda@aei.mpg.de}}  \\
Max-Planck-Institut f\"ur Gravitationsphysik \\ 
(Albert-Einstein-Institut), Am M\"uhlenberg 1, \\
D-14476 Potsdam, Germany}

\date{\today}

\maketitle

\begin{abstract}
For conformally K\"ahler Riemannian four-manifolds with a Killing field, we present a framework to solve the field equations for generalised gravitational instantons corresponding to conformal self-duality and to cosmological Einstein-Maxwell. After deriving generic identities for the curvature of such manifolds without assuming field equations, we obtain $SU(\infty)$ Toda formulations for the Page-Pope, Pleba\'nski-Demia\'nski, and Chen-Teo classes, we show how to solve the (modified) Toda equation, and we use this to find conformally self-dual and Einstein-Maxwell generalisations of these geometries.
\end{abstract}

\section{Introduction}
\label{Sec:Intro}

Gravitational instantons \cite{Hawking1976, GH1979, Gibbons1980} are four-dimensional, complete, Ricci-flat Riemannian manifolds with sufficiently fast curvature decay. They are expected to give the dominant contributions to the path integral for Euclidean quantum gravity.
Particular cases are metrics with self-dual Riemann tensor, while more general cases correspond to generalisations of the Ricci-flat condition. These generalisations include the addition of a cosmological constant, solutions to Einstein-Maxwell theory, conformally self-dual geometries (i.e. metrics with self-dual Weyl tensor), Bach-flat metrics, etc. These geometries, which we will call {\em generalised gravitational instantons} here, are the main object of study in this paper.

The curvature decay condition means that the geometry is required to have at most quartic volume growth. This leads to a number of possibilities such as asymptotically flat (AF), asymptotically locally flat (ALF), asymptotically Euclidean (AE) or asymptotically locally Euclidean (ALE) metrics. In quantum gravity, these have a physical interpretation in terms of vacuum and finite-temperature states of the gravitational field \cite{Gibbons1980, GPR1979}. In the Ricci-flat case, the only AE instanton is flat space, but the ALE class includes non-trivial examples such as the Eguchi-Hanson instanton. An ALF example is the Taub-NUT instanton, while the AF class contains for instance the Euclidean Schwarzschild and Kerr solutions, as well as the recently discovered Chen-Teo instanton \cite{ChenTeo1}, which gives a counterexample to the Euclidean black hole uniqueness conjecture \cite{Gibbons1980}. We mention that other types of asymptotic structures compatible with the required volume growth are also possible; these are termed ALG, ALH, ALG*, ALH*. 

Under the hyper-K\"ahler assumption, a complete classification of gravitational instantons is now available \cite{CC1, CVZ, Cherkis1, Cherkis2, Hein1, Hein2, Kronheimer, Kronheimer2, Minerbe, Sun}. Recall that the hyper-K\"ahler condition means that the metric is K\"ahler with respect to a 2-sphere of complex structures, and that, assuming that the manifold is simply connected, this is equivalent to (anti-)self-duality of the Riemann tensor. (In the non-simply connected case this equivalence is no longer true, see \cite{Wright}.)
If one assumes only the K\"ahler condition, then Ricci-flatness implies ({\em locally}) hyper-K\"ahler. A less-restrictive assumption is the Hermitian condition. In fact, as noticed in \cite[Question 1.4]{AkAnd}, all known examples of Ricci-flat gravitational instantons are Hermitian. Using Bianchi identities, this implies that they are {\em conformally} K\"ahler, and have at least one Killing field. If, in addition, one also assumes the existence of a second (commuting) Killing field, then the recent Biquard-Gauduchon classification \cite{BG, BG2} proves that the only toric, Ricci-flat, Hermitian, ALF instantons are the Kerr, Chen-Teo, Taub-bolt and Taub-NUT metrics. While the Kerr and Taub-bolt/NUT metrics have non-Ricci-flat (cosmological and Einstein-Maxwell) versions, the corresponding generalisation of the Chen-Teo instanton is an open problem, and constitutes a major motivation for the current paper.

From \cite{DunajskiTod}, a four-dimensional geometry is conformally K\"ahler if and only if it 
has a solution to the 2-index twistor equation, or, in tensor terms, a conformal Killing-Yano 2-form. These objects have appeared in gravitational physics since at least the work of Walker and Penrose \cite{WalkerPenrose} on the geometric origin of Carter's constant in the Kerr space-time. Many more geometries admitting these tensors have since been found, and, in a general relativity context, these Killing-like objects are now usually referred to as {\em hidden symmetries}. They also give origin to other ``ordinary'' symmetries (Killing vectors). We refer to \cite{Frolov} for a recent review on this topic.

Motivated by the above, in this work we study `generalised gravitational instantons with hidden symmetries', that is, Riemannian, non-Ricci-flat 4-manifolds with a conformal K\"ahler structure and a Killing field. We will focus on the field equations corresponding to conformal self-duality and cosmological Einstein-Maxwell. 
Our methods are based on the remarkable constructions developed by LeBrun \cite{Lebrun91} (scalar-flat K\"ahler metrics with symmetry), Tod \cite{Tod2020} (one-sided type D Ricci-flat metrics), and Flaherty \cite{Flaherty78} (connections between the scalar-flat K\"ahler and Einstein-Maxwell systems).
Our main results are as follows:

\begin{itemize}
\item We obtain, in section \ref{Sec:CKinstantons}, a number of identities for conformally K\"ahler metrics whose Ricci tensor is invariant under the complex structure (but no other assumptions), which give a compact formula for the Ricci form, and allow to construct generalised instantons in a variety of gravitational theories from solutions to a single scalar ($SU(\infty)$ Toda) equation.
\item In section \ref{Sec:PPclass} we show that the Page-Pope ansatz \cite{PP} is generically ambi-K\"ahler without assuming any field equations, and we classify all conformally self-dual solutions. It follows that some recently considered \cite{Giribet} Bach-flat instantons are actually conformally Einstein. We also obtain the general cosmological Einstein-Maxwell solution.
\item In section \ref{Sec:PD} we obtain the $SU(\infty)$ Toda formulation for the Pleba\'nski-Demia\'nski ansatz, and we give a simple trick to solve the (modified) Toda equation. We use this to construct a conformally self-dual Pleba\'nski-Demia\'nski space, which depends on 5 parameters and is not Einstein, so it is different from the standard self-dual limit of Pleba\'nski-Demia\'nski.
\item In section \ref{Sec:ChenTeo} we obtain the $SU(\infty)$ Toda formulation for the Chen-Teo ansatz, and we give a simple trick to solve the corresponding Toda equation. We construct a family of (non-Ricci-flat) conformally self-dual Chen-Teo geometries, as well as a family of Einstein-Maxwell Chen-Teo solutions.
\end{itemize}

We end this introduction with some further motivation and references. Non-Ricci-flat instantons are interesting in a number of situations in high-energy physics, for example: Einstein-Maxwell theory coincides with the bosonic sector of $N=2$ supergravity in four dimensions; conformally self-dual and Bach-flat geometries arise in conformal gravity; and Hawking's approach to space-time foam does not require the vacuum field equations \cite{Hawking1978, HPP1979}. Examples of cosmological Einstein-Maxwell instantons have been studied in \cite{Dunajski06, Dunajskietal1, Gutowski, Dunajskietal2}, while instantons in conformal gravity were considered in \cite{Strominger, Smilga} and more recently in \cite{DunajskiTod2014, Li2012, Lu2012, Liu2012, Giribet}. Recall that the conformal gravity field equations are the vanishing of the Bach tensor. 
Since in this work we are interested in conformally K\"ahler geometry, and Derdzi\'nski showed in  \cite[Proposition 4]{Derdzinski} that a K\"ahler metric with non-self-dual Weyl tensor is Bach-flat {\em if and only if} it is (locally) conformally Einstein, we will then not worry about solutions to the Bach-flat equations {\it per se} (as any such metric would be either conformally self-dual or a conformal transformation of an Einstein metric). In fact, we note that a classification of compact, Bach-flat K\"ahler surfaces has been recently given in \cite{Lebrun2020}. Other notable results in the non-Ricci-flat case that are related to the current paper include for instance compact Einstein-Hermitian \cite{Lebrun2012}, self-dual Einstein \cite{ApostolovGauduchon, Calderbank}, AE/ALE K\"ahler \cite{HeinLebrun}, and ambi-toric geometries \cite{ApostolovCalderbankGauduchon}. The $SU(\infty)$ Toda system was also discovered by Tod to play a key role for (anti-)self-dual Einstein metrics \cite{Tod95b}. Finally, Tod's recent work \cite{Tod2024} shows that one-sided type-D Einstein-Maxwell metrics with two commuting Killing fields can be encoded in a pair of axisymmetric solutions of the flat three-dimensional Laplacian.

\section{Conformally K\"ahler geometry}
\label{Sec:CKinstantons}

\subsection{Preliminaries}
\label{Sec:HS}

Let $(M,g_{ab})$ be a conformally K\"ahler 4-manifold\footnote{Our constructions will be purely local. In particular, some of our examples include the 4-sphere $S^4$, which does not admit a global complex structure.}, with complex structure $J^{a}{}_{b}$ and fundamental 2-form $\kappa_{ab}=g_{bc}J^{c}{}_{a}$. Recall that $\kappa_{ab}$ is necessarily self-dual (SD) or anti-self-dual (ASD) w.r.t. to the Hodge star. 
We will follow the conventions of Penrose \& Rindler \cite{PR1, PR2}: according to these, ASD 2-forms are expressed in spinor notation as $\varphi_{AB}\epsilon_{A'B'}$, while SD 2-forms are $\psi_{A'B'}\epsilon_{AB}$; see Eqs. (3.4.38)-(3.4.39) and the footnote in page 151 in \cite{PR1}. 
We choose to work with unprimed spinor indices, which implies that we choose $\kappa_{ab}$ to be ASD w.r.t the above convention. (Unfortunately, this means that our orientation conventions are opposite to the standard ones \cite{Besse, Derdzinski} in complex and Riemannian geometry.)
Then $\kappa_{ab}=j_{AB}\epsilon_{A'B'}$, where $j_{AB}$ is symmetric and satisfies $j^{A}{}_{C}j^{C}{}_{B}=-\delta^{A}_{B}$. 
The conformally rescaled 2-form is $\hat\kappa_{ab}=\Omega j_{AB}\hat\epsilon_{A'B'}$, where $\hat\epsilon_{A'B'}=\Omega\epsilon_{A'B'}$. The conformal K\"ahler property is $\hat{\nabla}_{a}\hat\kappa_{bc}=0$, where $\hat{\nabla}_{a}$ is the Levi-Civita connection of $\hat{g}_{ab}$. In spinors, this translates into $\hat\nabla_{AA'}(\Omega j_{BC})=0$. Using the relation between $\hat\nabla_{AA'}$ and $\nabla_{AA'}$ (see \cite{PR1, PR2, DunajskiTod}), one deduces that $(M,g_{ab})$ has a solution to the 2-index twistor equation:
\begin{align}
\nabla_{A'(A}K_{BC)}=0, \qquad K_{AB}:=\Omega^{-1} j_{AB}. \label{KillingSpinor}
\end{align}
Define now $Z_{ab}:=K_{AB}\epsilon_{A'B'}$. A calculation using the 2-index twistor equation (see \cite[Eq. (6.4.6)]{PR2}) shows that
\begin{align}\label{CKYE}
 \nabla_{a}Z_{bc} =\nabla_{[a}Z_{bc]} - 2g_{a[b}\xi_{c]}, \qquad \xi_{a}:=\tfrac{1}{3}\nabla^{b}Z_{ab}.
\end{align}
The first equation is the conformal Killing-Yano (CKY) equation. In terms of the fundamental 2-form, the CKY tensor is $Z_{ab}=\Omega^{-1}\kappa_{ab}$.

Notice that $\xi_{b}$ has always zero divergence, $\nabla^{a}\xi_{a}=0$ (this follows from $\nabla^{a}\nabla^{b}Z_{ab}=0$ since $Z_{ab}$ is a 2-form). In addition, a calculation shows that $\xi_{b}$ can be expressed as
\begin{align}
 \xi_{b} = J^{a}{}_{b}\partial_{a}\Omega^{-1}. \label{vectorCKY}
\end{align}
From this expression, one deduces that the vector field $\xi^{a}\partial_{a}$ preserves both the conformal factor $\Omega$ and the fundamental 2-form $\kappa_{ab}$. 

The integrability conditions for the 2-index twistor equation equation lead to some restrictions for the curvature of $g_{ab}$. For example, the Ricci tensor $R_{ab}$ must satisfy the following identity:
\begin{align}
 R_{ab} - R_{cd}J^{c}{}_{a}J^{d}{}_{b} = 4J^{c}{}_{a}\nabla_{(c}\xi_{b)}. \label{InvCondRicci}
\end{align}
This equation implies that $\xi_{a}$ (given by \eqref{vectorCKY}) is a Killing vector if and only if the Ricci tensor is invariant under the complex structure, meaning that $R_{ab}=R_{cd}J^{c}{}_{a}J^{d}{}_{b}$.
Furthermore, the ASD Weyl tensor must be Petrov type D: decomposing $K_{AB}$ in terms of its principal spinors as $K_{AB}=2\i\,\Omega^{-1} o_{(A}o^{\dagger}_{B)}$ (where $o^{\dagger}_A$ is the spinor conjugate of $o_A$, and $o_{A}o^{\dagger A}=1$), one deduces that 
$\Psi_{ABCD}=6\Psi_{2}o_{(A}o_{B}o^{\dagger}_{C}o^{\dagger}_{D)}$, where
\begin{align}
\Psi_{2}:= \Psi_{ABCD}o^{A}o^{B}o^{\dagger C}o^{\dagger D} = -\tfrac{1}{8}C_{abcd}J^{ac}J^{bd} \label{ASDWeyl}.
\end{align}
This Weyl scalar is closely related to the scalar curvature $\hat{R}$ of the K\"ahler metric $\hat{g}_{ab}=\Omega^{2}g_{ab}$: a calculation (using the identity $\hat{\nabla}_{a}J^{b}{}_{c}=0$) shows that
\begin{align}
\Psi_{2}=\Omega^{2}\frac{\hat{R}}{12}. \label{Psi2}
\end{align}
Since this is the only non-trivial component of the ASD Weyl tensor, conformal self-duality reduces simply to the scalar equation $\Psi_2=0$.

There is also a close connection between K\"ahler geometry and the Einstein-Maxwell equations. This dates back to the work of Flaherty \cite{Flaherty78}, who showed that scalar-flat K\"ahler metrics are automatically solutions to the Einstein-Maxwell system. This remarkable connection has been generalised to the case of constant-scalar-curvature K\"ahler surfaces \cite{Lebrun2010}, and even to conformally K\"ahler metrics with $J$-invariant Ricci tensor \cite{ApostolovCalderbankGauduchon, Lebrun2015, Lebrun2016, ApostolovMaschler, Koca}. Recall that, given a 4-manifold $(M,g_{ab})$ and a 2-form $F_{ab}$, the Einstein-Maxwell equations with cosmological constant $\lambda$ are
\begin{equation}\label{CEM}
\begin{aligned}
 R_{ab}-\frac{R}{2}g_{ab} +\lambda g_{ab} ={}& 2F_{ac}F_{b}{}^{c} - \frac{1}{2}g_{ab}F_{cd}F^{cd}, \\
 \nabla^{a}F_{ab} ={}& 0 = \nabla_{[a}F_{bc]}.
\end{aligned}
\end{equation}
Suppose that $(M,g_{ab})$ is conformally K\"ahler with $J$-invariant Ricci tensor, and define $F_{ab}=F^{-}_{ab}+F^{+}_{ab}$, where 
\begin{align}
F^{-}_{ab} := \Omega^{2}\kappa_{ab}, \qquad 
F^{+}_{ab} := \tfrac{1}{4}\Omega^{-2}(\rho_{ab}-\tfrac{R}{4}\kappa_{ab}), 
\label{SDMaxwell}
\end{align}
and $\rho_{ab}=R_{bc}J^{c}{}_{a}$ is the Ricci form. Then one can show, see \cite[Proposition 5]{ApostolovCalderbankGauduchon} and \cite[Theorem A]{Lebrun2015}, that the system \eqref{CEM} is equivalent to the constancy of the scalar curvature: $R=4\lambda$. 

So far we focused on geometric objects, but one can also obtain powerful coordinate expressions for various quantities: LeBrun's description \cite{Lebrun91} of K\"ahler surfaces with a Killing field can be adapted to (strictly) conformally K\"ahler metrics (see Tod's formulation \cite{Tod2020} for the Ricci-flat case). That is, if $(M,g_{ab})$ is strictly conformally K\"ahler with a $J$-invariant Ricci tensor, then there are local coordinates $(\psi,x,y,z)$, real functions $W(x,y,z), u(x,y,z)$, and a 1-form $A(x,y,z)$ (with $\partial_{\psi}\lrcorner A=0$) such that the metric $g$ and the fundamental 2-form $\kappa$ can be written in {\em LeBrun normal form}:
\begin{align}
 g ={}& W^{-1}(\d\psi+A)^{2} + W[\d{z}^{2}+e^{u}(\d{x}^{2}+\d{y}^{2})], \label{metric-gen} \\
 \kappa ={}& (\d\psi+A)\wedge\d{z} + We^{u}\d{x}\wedge\d{y}, \label{F2form-gen}
\end{align}
cf. \cite{Lebrun91, ACG, Tod2020}.
In the rest of this paper we will make heavy use of these expressions, so it is convenient to give a formula for the quantities in \eqref{metric-gen}: using the objects in \eqref{vectorCKY}, and letting $\ell,m$ be two independent type-(1,0) forms, we have
\begin{align}
 \xi=\partial_{\psi}, \qquad
 z=\Omega^{-1}, \qquad 
 W^{-1}=g_{ab}\xi^{a}\xi^{b}, \qquad 
 e^{u/2}(\d{x}+\i\d{y}) = 2 \, \xi \, \lrcorner \, (\ell\wedge m). 
 \label{KeyVariables}
\end{align}
The 1-form $A$ is defined by lowering an index of $\xi^a$, i.e. $\xi_{a}\d{x}^{a}=W^{-1}(\d\psi+A)$. The expressions for the K\"ahler metric $\hat{g}=\Omega^2 g$ and K\"ahler form $\hat\kappa = \Omega^2 \kappa$ are identical to \eqref{metric-gen} and \eqref{F2form-gen} after replacing $(z,W,e^{u})$ by $(\hat{z},\hat{W},e^{\hat{u}})$, where
\begin{align}
\hat{z} := -\frac{1}{z}, \qquad 
\hat{W} := z^{2}W, \qquad 
e^{\hat{u}} := \frac{e^u}{z^4}.
\label{hattedvariables}
\end{align}

The integrability conditions for the complex structure, together with $\d(z^{-2}\kappa)=0$, lead to the following `generalised' monopole equation relating the unknowns $A,W,u$:
\begin{align}
 \d{A} = -z^2\partial_{z}(\tfrac{We^u}{z^2})\d{x}\wedge\d{y}+(\partial_{y}W)\d{x}\wedge\d{z} - (\partial_{x}W) \d{y}\wedge\d{z}. \label{monopoleEq}
\end{align}
Applying an additional exterior derivative, one gets
\begin{align}
 W_{xx}+W_{yy}+\partial_{z}\left[z^2\partial_{z}(\tfrac{We^u}{z^2})\right]=0, \label{EqForW}
\end{align}
which involves only two unknowns, $W$ and $u$. To find an expression that involves only one of the unknowns, one can look at the scalar curvature: a calculation shows that the Ricci scalars of the metrics $\hat{g}$ and $g$ are, respectively,
\begin{align}
 \hat{R} ={}& -\frac{1}{\hat{W}e^{\hat{u}}}\left[ \hat{u}_{xx} + \hat{u}_{yy} + (e^{\hat{u}})_{\hat{z}\hat{z}} \right], \label{Todaghat} \\
  R ={}& -\frac{1}{We^u}\left[ u_{xx} + u_{yy} + (e^{u})_{zz} \right]. \label{Todag} 
\end{align}
We note that, when proving these identities \eqref{Todaghat}-\eqref{Todag}, while one uses that $\partial_{\psi}$ is a Killing vector, one does not use that the Ricci tensor of $g$ is $J$-invariant (i.e. even though $\partial_{\psi}$ is Killing, the vector field \eqref{vectorCKY} need not be Killing in order for \eqref{Todaghat}-\eqref{Todag} to hold; this observation is relevant for section \ref{sec:PPCSD}, see below eq. \eqref{candidatesKilling}).

\subsection{The Ricci form}

Having given the expressions \eqref{metric-gen}, \eqref{F2form-gen}, \eqref{KeyVariables}, the main result of this section is the following:

\begin{theorem}\label{Result:Ricci}
Let $(M,g_{ab},\kappa_{ab})$ be a conformally K\"ahler Riemannian 4-manifold whose Ricci tensor is invariant under the complex structure, so that $\xi_a$ given by \eqref{vectorCKY} is a Killing field and $g_{ab}$ and $\kappa_{ab}$ have the expressions \eqref{metric-gen}-\eqref{F2form-gen}. Then the Ricci form $\rho_{ab}=R_{bc}J^{c}{}_{a}$ is 
\begin{align}
\rho = \frac{1}{2}We^uR \: \d{x}\wedge\d{y} - \frac{W}{z^2}\left[ \tilde{*}\d 
- \xi \wedge \d \right]\left(\frac{W_{0}}{W}\right), 
\label{RicciFormFormula2}
\end{align}
where $R$ is the Ricci scalar \eqref{Todag}, we defined 
\begin{align}
 W_{0}:=z\left(1-\frac{zu_z}{2} \right) \label{Wring}
\end{align}
and, for an arbitrary function $f$, the operator $\tilde{*}\d$ is
\begin{align}
 \tilde{*}\d f := f_x \d{y}\wedge\d{z} + f_{y} \d{z}\wedge\d{x} + e^{u}f_z\d{x}\wedge\d{y}.
 \label{tilde*d}
\end{align}
In addition, the trace-free Ricci form can be expressed as
\begin{equation}
\begin{aligned}
\rho - \tfrac{R}{4}\kappa ={}& \left[\tfrac{R}{4}-\tfrac{1}{z^2}\partial_{z}(\tfrac{W_{0}}{W}) \right](-(\d\psi+A)\wedge\d{z} + We^{u}\d{x}\wedge\d{y}) \\
& + \frac{W}{z^2} \left[ \xi\wedge(\d{x} \, \partial_{x} + \d{y} \, \partial_{y}) 
- \d{z}\wedge (\d{x} \, \partial_{y}-\d{y} \, \partial_{x}) \right](\tfrac{W_{0}}{W}).
\label{TraceFreeRF}
\end{aligned}
\end{equation}
\end{theorem}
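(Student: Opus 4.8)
The plan is to compute $\rho$ by passing to the conformally related K\"ahler metric $\hat{g}=\Omega^{2}g$, whose Ricci form $\hat\rho_{ab}=\hat R_{bc}J^{c}{}_{a}$ is a \emph{closed} 2-form with a well-known explicit expression, and then to undo the rescaling via the conformal transformation law for the Ricci tensor. (The same result can be extracted from a direct orthonormal-frame curvature computation for $g$, but the conformal route is cleaner.) Two structural facts organise the work. First, since the Ricci tensor is $J$-invariant, $\rho$ is a real $(1,1)$-form, so it splits as $\rho=\tfrac{R}{4}\kappa+(\rho-\tfrac{R}{4}\kappa)$, and the trace part is read off immediately from \eqref{Todag} and \eqref{F2form-gen}. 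Second, because $\kappa$ is ASD, the primitive part $\rho-\tfrac{R}{4}\kappa$ is necessarily \emph{self-dual}; this is consistent with its role as the self-dual Maxwell field in \eqref{SDMaxwell}, and it forces the answer to be a combination of the three SD frame 2-forms, which is exactly the content of \eqref{TraceFreeRF}.

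First I would compute $\hat\rho$ for the K\"ahler metric in LeBrun form, using the standard identity $\hat\rho=-\tfrac{1}{2}\,\d\,\d^{c}\log\sqrt{\det\hat g}$ with volume density $\sqrt{\det\hat g}=\hat W e^{\hat u}$, together with the action of $J$ on the coframe, which in these coordinates pairs $\d{x}$ with $\d{y}$ and pairs $\d\hat{z}$ with the Killing 1-form $\hat{W}^{-1}(\d\psi+A)$. Evaluating this yields an explicit 2-form, structurally of the same type as \eqref{RicciFormFormula2}, whose component along the base area form $\d{x}\wedge\d{y}$ assembles, through the Toda expression \eqref{Todaghat}, into $\tfrac{1}{2}\hat{W}e^{\hat{u}}\hat{R}\,\d{x}\wedge\d{y}$, while the remaining components are built from $\tilde{*}\d$ and $\xi\wedge\d$ acting on derivatives of $\hat u$ and $\log\hat W$.

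Next I would apply the conformal law $\hat R_{ab}=R_{ab}-2(\nabla_{a}\nabla_{b}\omega-\nabla_{a}\omega\,\nabla_{b}\omega)-(\Delta\omega+2|\nabla\omega|^{2})g_{ab}$ with $\omega=\log\Omega=-\log z$, and contract with $J^{c}{}_{a}$ to solve for $\rho$ in terms of $\hat\rho$. The crucial simplification is that $\omega$ depends on the coordinate $z$ alone and that $\partial_{\psi}$ is Killing, so the Hessian and gradient terms reduce to a handful of Christoffel symbols of the LeBrun metric; the identities $J\,\d{z}=-W^{-1}(\d\psi+A)$ and its inverse then convert $z$-derivatives of the conformal factor precisely into the Killing-direction terms $\xi\wedge(\cdots)$. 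Finally I would convert all hatted quantities to unhatted ones via \eqref{hattedvariables}, i.e. $\partial_{\hat z}=z^{2}\partial_{z}$ and $\hat u=u-4\log z$. The function $W_{0}=z(1-\tfrac{1}{2}z u_{z})$ of \eqref{Wring} emerges at exactly this point, when one combines $\hat u_{\hat z}=z^{2}u_{z}-4z$ with the conformal corrections (indeed $W_{0}=-\tfrac{1}{2}\hat u_{\hat z}-z$); collecting terms then gives \eqref{RicciFormFormula2}.

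For \eqref{TraceFreeRF} I would subtract $\tfrac{R}{4}\kappa$ using \eqref{F2form-gen} and split the operator $\tilde{*}\d-\xi\wedge\d$ into its purely base part, supplying the $\d{z}\wedge(\d{x}\,\partial_{y}-\d{y}\,\partial_{x})$ and $\d{x}\wedge\d{y}$ contributions, and its fibre part, supplying the $\xi\wedge(\d{x}\,\partial_{x}+\d{y}\,\partial_{y})$ contribution, reorganising them so that the scalar coefficient $\tfrac{R}{4}-\tfrac{1}{z^{2}}\partial_{z}(W_{0}/W)$ multiplies the self-dual form $-(\d\psi+A)\wedge\d{z}+We^{u}\d{x}\wedge\d{y}$. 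I expect the main obstacle to be the conformal-correction step: because $g$ is only \emph{conformally} K\"ahler, $J$ fails to be parallel for $\nabla$, so $(\nabla_{a}\nabla_{b}\omega)J^{b}{}_{c}$ is not simply $\d\d^{c}\omega$ and one must carefully account for the $\nabla J$ contributions (equivalently, perform the Hessian computation relative to $\hat\nabla$ and transfer it), all while keeping the orientation and SD/ASD conventions consistent. Tracking how the $z$-derivatives feed the $\xi\wedge(\cdot)$ terms, and checking that the top coefficient assembles into exactly $\tfrac{1}{2}We^{u}R$ with $R$ given by \eqref{Todag}, is the delicate bookkeeping that makes both displayed formulas come out.
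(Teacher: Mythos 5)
Your overall route is the same as the paper's: rescale to the K\"ahler metric $\hat g=\Omega^2 g$, compute its Ricci form $\hat\rho$ from an explicit potential, transform back with the conformal law for the Ricci tensor, and convert everything to the Toda variables via \eqref{hattedvariables}. You also correctly anticipate the one real subtlety of the conformal step --- that $J$ is parallel only for $\hat\nabla$, so the Hessian of the conformal factor must be taken relative to $\hat\nabla$ (the paper does exactly this, writing the transformation law in terms of $\hat\nabla_a\Omega^{-1}=-\xi_bJ^b{}_a$, which is what converts the $z$-derivatives into the $\xi\wedge(\cdot)$ terms). Your observation that $J$-invariance forces $\rho$ to be $(1,1)$ and the primitive part to be self-dual is a correct and useful structural check, consistent with \eqref{SDMaxwell}.

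There is, however, one concrete error in the first step. You take the Ricci potential of $\hat g$ to be $\log\sqrt{\det\hat g}=\log(\hat W e^{\hat u})$, i.e.\ the Riemannian volume density computed in the real LeBrun coordinates $(\psi,\hat z,x,y)$. The identity $\hat\rho=-\i\partial\bar\partial\log\det(\hat g_{\alpha\bar\beta})$ requires the determinant of the Hermitian matrix in \emph{holomorphic} coordinates, and for the LeBrun normal form this is $\det(\hat g_{\alpha\bar\beta})=e^{\hat u}$, not $\hat W e^{\hat u}$ (the factor $\hat W$ is absorbed by the Jacobian relating $(\psi,\hat z)$ to the second holomorphic coordinate). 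The paper uses $\hat\rho=-\i\partial\bar\partial\hat u=\tfrac12\,\d(J\d\hat u)$. The discrepancy $\tfrac12\,\d(J\d\log\hat W)$ is generically nonzero --- $J\d\log\hat W$ contains the term $(\log\hat W)_{\hat z}\,\hat W^{-1}(\d\psi+A)$, whose exterior derivative picks up $\d A$ --- so carrying out your computation as written would produce a spurious contribution to $\hat\rho$ and hence to $\rho$. Once the potential is corrected to $\hat u$, the rest of your plan (including the identification $W_0=-\tfrac12\hat u_{\hat z}-z$, which matches how $W_0$ emerges in the paper) goes through and reproduces \eqref{RicciFormFormula2} and \eqref{TraceFreeRF}.
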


\begin{remark}
\begin{enumerate}
\item From \eqref{RicciFormFormula2} we see that Ricci-flatness $\rho_{ab}=0$ reduces to $\frac{W_{0}}{W}=\gamma={\rm const.}$ together with the $SU(\infty)$ Toda equation $u_{xx} + u_{yy} + (e^{u})_{zz}=0$, so we recover Tod's result \cite{Tod2020}. Comparison to the Schwarzschild case (cf. section \ref{Sec:Examples} below) suggests to use the notation $\gamma\equiv -M$, where $M$ is Schwarzschild's mass.
\item From \eqref{TraceFreeRF}, the Einstein condition $\rho-\tfrac{R}{4}\kappa=0$ (i.e. $R_{ab}=\lambda g_{ab}$) is satisfied if and only if $u$ satisfies the modified Toda equation
\begin{align}
 u_{xx}+u_{yy}+(e^{u})_{zz}=-4\lambda W e^{u}
\end{align}
and $\frac{W_{0}}{W}$ is a function of only $z$ satisfying $\frac{1}{z^2}\frac{\d}{\d{z}}(\tfrac{W_{0}}{W})=\lambda$, whose solution is
\begin{align}
 W\equiv W_{\lambda}:=\frac{W_{0}}{\frac{\lambda}{3}z^3+\gamma}=\frac{z\left(1-\frac{zu_z}{2} \right)}{\frac{\lambda}{3}z^3+\gamma}
 \label{Wlambda}
\end{align}
where $\gamma$ is an integration constant. This was also obtained by Tod in \cite{Tod2020}.
\item In the cosmological Einstein-Maxwell case, $R=4\lambda$, in view of \eqref{SDMaxwell}, formula \eqref{TraceFreeRF} gives us an explicit expression for the SD part of the Maxwell field.
\end{enumerate}
\end{remark}

\begin{proof}[Proof of Theorem \ref{Result:Ricci}]
We start by recalling that for any two metrics $g_{ab}$ and $\hat{g}_{ab}=\Omega^{2}g_{ab}$, whose Ricci tensors are $R_{ab}$ and $\hat{R}_{ab}$ respectively, the relation between $R_{ab}$ and $\hat{R}_{ab}$ is given by a standard conformal transformation formula (we use \cite[Eq. (D.8)]{Wald} in four dimensions):
\begin{align*}
\hat{R}_{ab}=R_{ab}-2\nabla_a\nabla_b\log\Omega+2(\nabla_a\log\Omega)(\nabla_b\log\Omega)-g_{ab}g^{cd}[\nabla_c\nabla_d\log\Omega+2(\nabla_c\log\Omega)(\nabla_d\log\Omega)].
\end{align*}
Since we are interested in the case in which there is a complex structure which is parallel w.r.t. the Levi-Civita connection $\hat\nabla_{a}$ of $\hat{g}_{ab}$, and we also have the special relation \eqref{vectorCKY} involving derivatives of $\Omega^{-1}$, it is convenient to rewrite the above formula as
\begin{gather}
 R_{ab} = \hat{R}_{ab} - 2\Omega\hat\nabla_a\hat\nabla_b\Omega^{-1} + 4\Omega^{2}(\hat\nabla_a\Omega^{-1})(\hat\nabla_b\Omega^{-1}) - \Sigma \hat{g}_{ab}, \label{CTRicci} \\
 \Sigma := \Omega\hat{g}^{ab}\hat\nabla_a\hat\nabla_b\Omega^{-1} + \Omega^2\hat{g}^{ab}(\hat\nabla_a\Omega^{-1})(\hat\nabla_b\Omega^{-1}). \label{Sigma}
\end{gather}
(To quickly check \eqref{CTRicci}-\eqref{Sigma}, simply eliminate the logarithm in the formula for $\hat{R}_{ab}$ above and then formally replace $R_{ab}\leftrightarrow\hat{R}_{ab}$, $\nabla \to \hat\nabla$, $\Omega\to\Omega^{-1}$.)
Assuming now that $\hat{g}_{ab}$ is K\"ahler, we recall from the previous section that $\hat\nabla_{a}J^{b}{}_{c}=0$, $\hat\nabla_a\Omega^{-1}=-\xi_{b}J^{b}{}_{a}$, where $\xi_a$ is defined in \eqref{CKYE}. At this point we are not assuming that $\xi_{a}$ is Killing. Contracting  \eqref{CTRicci} with $J^{b}{}_{c}$, we get
\begin{align}
R_{ab}J^{b}{}_{c} = -\hat{\rho}_{ac}-2\Omega\hat\nabla_a\xi_c-4\Omega^2J^{b}{}_{a}\xi_{b}\xi_{c}+\Sigma\hat\kappa_{ac},  \label{RicciJ0}
\end{align}
where $\hat{\rho}_{ac}\equiv \hat{R}_{cb}J^{b}{}_{b}$ is the Ricci form of $\hat{g}_{ab}$. From \eqref{InvCondRicci}, we know that the Ricci tensor $R_{ab}$ is $J$-invariant if and only if $\xi_{a}$ is Killing, $\nabla_{(a}\xi_{b)}=0$. Assuming this to be the case, $R_{ab}J^{b}{}_{c}$ is anti-symmetric and so we can define the Ricci form of $g_{ab}$, $\rho_{ac}:=R_{cb}J^{b}{}_{a}=-\rho_{ca}$. Using that the two terms with $\xi_{a}$ in \eqref{RicciJ0} are anti-symmetric in $ac$ (not separately but together), after some manipulations we get the formula 
\begin{align}
\rho = \hat{\rho} - \Sigma\hat\kappa +\Omega^{-1}\d(\Omega^2\xi). \label{RFformula0}
\end{align}
Now, we want to express \eqref{RFformula0} in terms of $u,W$. For the scalar $\Sigma$, defined in \eqref{Sigma}, note that it can be written as $\Sigma = \Omega\hat\Box\Omega^{-1}+W^{-1}$, where $\hat\Box=\hat{g}^{ab}\hat\nabla_{a}\hat\nabla_{b}$. Alternatively, we have $\Omega\hat\Box\Omega^{-1}=-\Omega^{-3}\Box\Omega=-z^{3}\Box z^{-1}$. We then find 
\begin{align}
 \Sigma = \frac{zu_z - 1}{W}. \label{Sigma1}
\end{align}
For the Ricci form $\hat{\rho}$ of the K\"ahler metric $\hat{g}_{ab}$, we use the well-known formula $\hat{\rho}=-\i\partial\bar\partial\log\det(\hat{g}_{\alpha\bar\beta})$, where $\partial=\d{z}^{\alpha}\wedge\partial_{\alpha}$ and  $\bar\partial=\d\bar{z}^{\alpha}\wedge\partial_{\bar{\alpha}}$ are Dolbeault operators defined by the complex structure. We have $\log\det(\hat{g}_{\alpha\bar\beta}) = \hat{u}$, so $\hat{\rho}=-i\partial\bar\partial\hat{u}$. In addition, the identity $-i\partial\bar\partial f = \frac{1}{2}\d(J\d f)$ holds for any smooth $f$. Putting then $f=\hat{u}$ and using $J(\d{x})=-\d{y}$, $J(\d{y})=\d{x}$, $J(\d{z})=\xi$, we get
\begin{align*}
\hat{\rho} = \tfrac{1}{2}\left[-(\hat{u}_{xx}+\hat{u}_{yy})\d{x}\wedge\d{y}-\hat{u}_{xz}\d{z}\wedge\d{y}+\hat{u}_{yz}\d{z}\wedge\d{x}+\d\hat{u}_{z}\wedge\xi+\hat{u}_{z}\d\xi\right].
\end{align*}
Replacing this expression, together with \eqref{Sigma1} and \eqref{F2form-gen}, in equation \eqref{RFformula0}:
\begin{align*}
\rho ={}& \left[ -\tfrac{1}{2}(\hat{u}_{xx}+\hat{u}_{yy})-\tfrac{(zu_z-1)}{z^2}e^{u} \right]\d{x}\wedge\d{y}-\tfrac{1}{2}\hat{u}_{xz}\d{z}\wedge\d{y}+\tfrac{1}{2}\hat{u}_{yz}\d{z}\wedge\d{x} \\
& +\left[ \tfrac{1}{2}\d\hat{u}_{z}+\tfrac{(zu_z-1)}{z^2}\d{z}-\tfrac{2}{z^2}\d{z} \right]\wedge\xi + \left( \tfrac{1}{2}\hat{u}_{z}+\tfrac{1}{z} \right)\d\xi.
\end{align*}
Using the relation \eqref{hattedvariables} between $\hat{u}$ and $u$, after some tedious computations we arrive at the unenlightening expression
\begin{align*}
\rho ={}& -\tfrac{1}{2}\left[u_{xx}+u_{yy}+\tfrac{2(zu_z-1)}{z^2}e^u+\tfrac{2z}{W}(\tfrac{zu_z}{2}-1)\partial_{z}(\tfrac{We^u}{z^2}) \right]\d{x}\wedge\d{y} \\
& +\tfrac{1}{2}\left[ u_{yz}-\tfrac{2}{zW}(\tfrac{zu_z}{2}-1)W_{y} \right](\d{z}\wedge\d{x}+\d{y}\wedge\d\xi)\\
& +\tfrac{1}{2}\left[ u_{xz}-\tfrac{2}{zW}(\tfrac{zu_z}{2}-1)W_{x} \right](-\d{z}\wedge\d{y}+\d{x}\wedge\d\xi)\\
& +\tfrac{1}{2}\left[\tfrac{1}{z^2}(z^2u_{zz}+2zu_z-2)-\tfrac{2}{zW}(\tfrac{zu_z}{2}-1)W_{z} \right]\d{z}\wedge\xi.
\end{align*}
Now, defining $W_{0}$ as in \eqref{Wring}, we have the identities
\begin{align*}
 u_{yz}-\tfrac{2}{zW}\left(\tfrac{zu_z}{2}-1\right)W_{y} ={}& -\tfrac{2W}{z^2}\partial_{y}\left(\tfrac{W_{0}}{W}\right), \\
 u_{xz}-\tfrac{2}{zW}\left(\tfrac{zu_z}{2}-1\right)W_{x} ={}& -\tfrac{2W}{z^2}\partial_{x}\left(\tfrac{W_{0}}{W}\right), \\
\tfrac{1}{z^2}(z^2u_{zz}+2zu_z-2)-\tfrac{2}{zW}(\tfrac{zu_z}{2}-1)W_{z} ={}& -\tfrac{2W}{z^2}\partial_{z}\left(\tfrac{W_{0}}{W}\right),
\end{align*}
which lead to 
\begin{align*}
\rho ={}& -\tfrac{1}{2}\left[u_{xx}+u_{yy}+ \tfrac{2e^u}{z^2}(1-\tfrac{W_{0}}{W}W_z - W_{0}u_z) \right]\d{x}\wedge\d{y} \\
& - \tfrac{W}{z^2}\d{z}\wedge(\d{x} \:\partial_{y} - \d{y} \:\partial_{x})(\tfrac{W_{0}}{W}) - \tfrac{W}{z^2}\d(\tfrac{W_{0}}{W})\wedge\xi. 
\end{align*}
Defining the operator $\tilde{*}\d$ as in \eqref{tilde*d}, we have 
\begin{align*}
\d{z}\wedge(\d{x} \:\partial_{y}-\d{y} \:\partial_{x})(\tfrac{W_{0}}{W}) = \tilde{*}\d(\tfrac{W_{0}}{W}) - e^{u}\partial_{z}(\tfrac{W_{0}}{W})\d{x}\wedge\d{y},
\end{align*}
which then leads to our final formula \eqref{RicciFormFormula2}. Having shown this, the proof of \eqref{TraceFreeRF} requires only a few more tedious but straightforward computations, so we will omit them. 
\end{proof}

In addition to the Ricci scalar \eqref{Todag} and the Ricci form \eqref{RicciFormFormula2}, we will also need expressions for the ASD Weyl tensor. Recalling that the only non-trivial component is $\Psi_2$, and using \eqref{Psi2}, \eqref{Todaghat} and \eqref{Wring}, a short calculation gives
\begin{align}
 \Psi_{2} = -\frac{1}{z^3}\frac{W_{0}}{W} + \frac{R}{12}. \label{Psi2-2}
\end{align}
In particular, notice that in the Ricci-flat and Einstein cases, we recover a well-known relation between $\Psi_{2}$ and the conformal factor: $\Psi_{2}\propto\Omega^{3}$ (recall $\Omega=z^{-1}$).

Finally, we also notice that a geometry may be conformally K\"ahler w.r.t. {\em both} ASD and SD orientations: this is called an ambi-K\"ahler structure \cite{ApostolovCalderbankGauduchon}. In this case we have two integrable complex structures $(J_{\pm})^{a}{}_{b}$ and two K\"ahler metrics $\hat{g}^{\pm}_{ab}=\Omega^{2}_{\pm}g_{ab}$. As in \eqref{vectorCKY}, we now have 
\begin{align}
\xi^{\pm}_{b}=(J_{\pm})^{a}{}_{b}\partial_{a}\Omega^{-1}_{\pm}. \label{CandKAK}
\end{align}
If at least one of $\xi^{a}_{\pm}$ is a Killing vector, then all of the previous results apply w.r.t. the corresponding orientation $\pm$. If both $\xi^{a}_{\pm}$ are Killing, then we will have two Toda formulations, one for each orientation: the corresponding Toda variables $(u_{\pm},W_{\pm},z_{\pm},x_{\pm},y_{\pm})$ are the analogue of \eqref{KeyVariables},
\begin{align}
 z_{\pm} = \Omega_{\pm}^{-1}, \qquad 
 W_{\pm}^{-1} = g_{ab}\xi^{a}_{\pm}\xi^{b}_{\pm}, \qquad 
 e^{u_{\pm}/2}(\d{x}_{\pm}+\i\d{y}_{\pm}) = 2 \, \xi_{\pm} \, \lrcorner \, (\ell\wedge m^{\pm}),
 \label{TVAK}
\end{align}
where in the last equality we defined $m^{+}\equiv m$, $m^{-}\equiv \bar{m}$. Analogously to \eqref{Wring}, we also put $W^{\pm}_{0}=z_{\pm}(1-\frac{1}{2}z_{\pm}\partial_{z_{\pm}}u_{\pm})$. The Weyl tensor is now type $D\otimes D$. The only non-trivial components are $\Psi_{2}^{-}\equiv\Psi_{2}$ and $\Psi_{2}^{+}\equiv\tilde{\Psi}_{2}$, which can be computed as
\begin{align}
 \Psi_{2}^{\pm} = \Omega_{\pm}^{2}\frac{\hat{R}_{\pm}}{12}
 =-\frac{1}{z^{3}_{\pm}}\frac{W^{\pm}_0}{W_{\pm}} + \frac{R}{12}, 
 \label{Psi2AK}
\end{align}
where $\hat{R}_{\pm}$ are the Ricci scalars of the K\"ahler metrics $\hat{g}^{\pm}_{ab}$. (Recall that \eqref{Psi2AK} is valid regardless of whether \eqref{CandKAK} are Killing or not.)

\subsection{Examples}
\label{Sec:Examples}

\noindent
{\bf Spherical symmetry.}
Consider the ansatz
\begin{align}
 g = f(r)\d\tau^2+\frac{\d{r}^2}{f(r)}+r^{2}(\d\theta^2+\sin^2\theta\d\varphi^2), \label{SS}
\end{align}
where $f$ is an arbitrary smooth function of $r$. Choose the coframe $\beta^0=f^{1/2}\d\tau$, $\beta^1=f^{-1/2}\d{r}$, $\beta^2=r\d\theta$, $\beta^3=r\sin\theta\d\varphi$, define the fundamental 2-forms $\kappa^{\pm}=\beta^{0}\wedge\beta^{1}\mp\beta^{2}\wedge\beta^{3}$ ($\kappa^{+}$ is SD and $\kappa^{-}$ is ASD), and the associated almost-complex structures $(J_{\pm})^{a}{}_{b}=\kappa^{\pm}_{bc}g^{ca}$. The type-$(1,0)$ eigenspaces of $J_{\pm}$ are generated by $\ell=\frac{1}{\sqrt{2}}(\beta^{0}+\i\beta^{1})$, $m^{\pm}=\frac{1}{\sqrt{2}}(\beta^2\mp\i\beta^3)$. Putting $a^{0}_{\pm}=f^{-1/2}$, $b^{0}_{\pm}=0$, and $a^{1}_{\pm}=0$, $b^{1}_{\pm}=(r\sin\theta)^{-1}$, we see that the type $(1,0)$-forms $a^{\alpha}_{\pm}\ell+b^{\alpha}_{\pm}m^{\pm}$ ($\alpha=0,1$) are closed, so both $J_{+}$ and $J_{-}$ are integrable. Furthermore, if $\Omega_{\pm}\equiv r^{-1}$ then it is straightforward to see that $\d[\Omega_{\pm}^{2}\kappa^{\pm}]=0$, so the geometry \eqref{SS} is ambi-K\"ahler. Finally, computing the vector fields \eqref{CandKAK}, we get $\xi^{a}_{\pm}\partial_{a}=\partial_{\tau}$, which is Killing. Thus, regardless of the form of the arbitrary function $f(r)$, the geometry is conformally K\"ahler (in fact ambi-K\"ahler) with a $J$-invariant Ricci tensor. We then compute the variables \eqref{TVAK}:
\begin{align}
z_{\pm}=r, \qquad W_{\pm}^{-1}=f, \qquad e^{u_{\pm}} = fr^2\sin^2\theta, \qquad \d{x_{\pm}}=\frac{\d\theta}{\sin\theta}, \qquad \d{y_{\pm}}=\mp\d\varphi. \label{TodaSS}
\end{align}
Using the formulas of previous sections, a short calculation gives
\begin{align}
 R = \frac{2 - (r^2 f)''}{r^{2}}, \qquad
 \Psi^{\pm}_{2}= \frac{\{ 2-r^2[r^2(f/r^2)' ]' \}}{12 r^2}, \qquad
 \frac{W^{\pm}_{0}}{W_{\pm}} = -\frac{r^2 f' }{2},
\end{align}
where a prime $'$ represents a derivative w.r.t. $r$. 

The cosmological Einstein-Maxwell equations are $R=4\lambda$, which gives
\begin{align}
 f(r)=1+\frac{a_1}{r}+\frac{a_2}{r^2}-\frac{\lambda}{3}r^2
\end{align}
for some constants $a_1,a_2$. The Weyl scalars $\Psi^{\pm}_2$ and the two pieces $F^{\pm}$ of the Maxwell field are:
\begin{align}
 \Psi^{\pm}_{2} =-\frac{a_{1}}{2r^3}-\frac{a_{2}}{r^{4}}, \qquad 
 F^{\pm} = (-a_{2}/4)^{\frac{1\pm1}{2}}
 \left(\frac{1}{r^2}\d\tau\wedge\d{r} \mp \sin\theta\d\theta\wedge\d\varphi \right).
\end{align}
Setting $a_{1}\equiv-2M$, $a_{2}\equiv Q^2$, we recognise the Euclidean Reissner-N\"ordstrom-(A)dS solution.

We can alternatively look for $f(r)$ such that the ansatz \eqref{SS} is conformally self-dual. Recall that this is equivalent to $\Psi^{-}_{2}\equiv\Psi_{2}=0$. Since $\Psi^{-}_{2}=\Psi^{+}_{2}$, we see that $\Psi_{2}=0$ gives $C_{abcd}\equiv0$, so the self-dual solution to the ansatz \eqref{SS} is conformally flat. The condition $\Psi^{\pm}_{2}=0$ gives $f(r)=1+b_1 r+ b_2 r^2$ for arbitrary constants $b_1,b_2$. The Ricci scalar and trace-free Ricci form are 
\begin{align}
R=-6\left(\frac{b_{1}}{r}+2b_2\right), \qquad 
\rho^{\pm}-\frac{R}{4}\kappa^{\pm}=\frac{2b_1}{r}\left(\d\tau\wedge\d{r}\pm r^2\sin\theta\d\theta\wedge\d\varphi\right).
\end{align}
We see that the geometry is Einstein iff $b_1=0$, in which case it reduces to Euclidean (anti-)de Sitter space with cosmological constant $-3b_{2}$ (which is $S^4$ if $b_2<0$).

\medskip
\noindent
{\bf The Kerr-Newman ansatz.}
This example allows us to illustrate a trick to solve the Toda equation, that will be key for later sections. Consider the ansatz 
\begin{align}
 g = \frac{\Delta}{\Sigma}(\d\tau-a\sin^2\theta\d\varphi)^2+\frac{\sin^2\theta}{\Sigma}[a\d\tau+(r^2-a^2)\d\varphi]^2+\frac{\Sigma}{\Delta}\d{r}^2+\Sigma\d\theta^2, \label{KNAnsatz}
\end{align}
where $a$ is a real constant, $\Sigma=r^2-a^2\cos^2\theta$, and $\Delta=\Delta(r)$ is an arbitrary function of $r$. As in the previous example, we start by choosing a coframe: $\beta^0=(\frac{\Delta}{\Sigma})^{1/2}(\d\tau-a\sin^2\theta\d\varphi)$, $\beta^1=(\frac{\Sigma}{\Delta})^{1/2}\d{r}$, $\beta^2=\sqrt{\Sigma}\d\theta$, $\beta^3=\frac{\sin\theta}{\sqrt{\Sigma}}[a\d\tau+(r^2-a^2)\d\varphi]$; we define $\kappa^{\pm}=\beta^{0}\wedge\beta^{1}\mp\beta^{2}\wedge\beta^{3}$ and the almost-complex structures $(J_{\pm})^{a}{}_{b}=\kappa^{\pm}_{bc}g^{ca}$. Type-$(1,0)$ forms for $J_{\pm}$ are spanned by $\ell=\frac{1}{\sqrt{2}}(\beta^{0}+\i\beta^{1})$ and $m^{\pm}=\frac{1}{\sqrt{2}}(\beta^2\mp\i\beta^3)$. Putting $a^{0}_{\pm}=\frac{(r^2-a^2)}{\sqrt{\Delta\Sigma}}$, $b^{0}_{\pm}=\frac{\pm\i a\sin\theta}{\sqrt{\Sigma}}$, and $a^{1}_{\pm}=\frac{\pm\i a}{\sqrt{\Delta\Sigma}}$, $b^{1}_{\pm}=\frac{1}{\sin\theta\sqrt{\Sigma}}$, we see that $a^{\alpha}_{\pm}\ell+b^{\alpha}_{\pm}m^{\pm}$ ($\alpha=0,1$) are closed, so $J_{+}$ and $J_{-}$ are integrable. Defining $\Omega_{\pm}=(r\mp a\cos\theta)^{-1}$, a short calculation shows that $\d[\Omega^{2}_{\pm}\kappa^{\pm}]=0$, so the metric \eqref{KNAnsatz} is ambi-K\"ahler. The vector fields \eqref{CandKAK} can be computed to be $\xi^{a}_{\pm}\partial_{a}=\partial_{\tau}$, so they are Killing. Therefore, independently of the form of $\Delta(r)$, the metric \eqref{KNAnsatz} is conformally K\"ahler (ambi-K\"ahler) with a $J$-invariant Ricci tensor. We can then compute the variables \eqref{TVAK}:
\begin{align}
 z_{\pm}=r\mp a\cos\theta, \quad W_{\pm}=\frac{\Sigma}{\Delta+a^2\sin^2\theta}, \quad e^{u_{\pm}}=\Delta\sin^2\theta, \quad \d{x}_{\pm}=\frac{\mp a\d{r}}{\Delta}+\frac{\d\theta}{\sin\theta}, \quad \d{y_{\pm}}=\mp\d\varphi. 
 \label{TodaKN}
\end{align}

For concreteness let us focus on $z_{-},x_{-},y_{-}$, etc., and let us omit the subscript $-$.
The Einstein-Maxwell equations with $\lambda=0$ reduce to the $SU(\infty)$ Toda equation with symmetry: $u_{xx}+(e^{u})_{zz}=0$. From \eqref{TodaKN} we find the vector fields $\partial_{x},\partial_{z}$:
\begin{align}
\partial_{x}=\tfrac{\sin\theta\Delta}{\Delta+a^2\sin^2\theta}\left(a\sin\theta\partial_{r}+\partial_{\theta}\right), \qquad \partial_{z}=\tfrac{1}{\Delta+a^2\sin^2\theta}\left(\Delta\partial_{r}-a\sin\theta\partial_{\theta}\right). \label{VFKN}
\end{align}
One can check that $u_{xx}+(e^{u})_{zz}=0$ then becomes a quite complicated differential equation: we do not seem to gain anything in passing from $(x,z)$ to $(r,\theta)$. However, we can do the following trick: introduce an auxiliary variable $\sigma$ by
\begin{align}
 u_{x} = \sigma_{z}, \qquad (e^{u})_{z} = -\sigma_{x}. \label{trickKN}
\end{align}
Using \eqref{VFKN}, these equations lead respectively to:
\begin{subequations}
\begin{align}
 a\sin^2\theta\dot\Delta+2\cos\theta\Delta ={}& \Delta \: \partial_{r}\sigma - a\sin\theta \: \partial_{\theta}\sigma, \label{xz} \\
 -\sin\theta\dot\Delta+2a\sin\theta\cos\theta ={}& a\sin\theta \: \partial_{r}\sigma+\partial_{\theta}\sigma,
 \label{zx}
\end{align}
\end{subequations}
where $\dot\Delta=\frac{\d\Delta}{\d{r}}$. Now, from \eqref{zx} we get an expression for $\partial_{r}\sigma$, and we replace this in \eqref{xz}. The resulting equation relates $\dot\Delta$ and $\partial_{\theta}\sigma$. We then replace this new expression for $\partial_{\theta}\sigma$ back in \eqref{zx}. After these manipulations, we get the following system:
\begin{align}
\partial_{\theta}\sigma=-\sin\theta\dot{\Delta}, \qquad \partial_{r}\sigma=2\cos\theta.
\end{align}
Using now the identity $\partial_{r}\partial_{\theta}\sigma=\partial_{\theta}\partial_{r}\sigma$, we immediately get:
\begin{align}
 \ddot{\Delta}=2 \qquad \Rightarrow \qquad \Delta=r^2+c_{1}r+c_{2},
\end{align}
for some constants $c_{1},c_{2}$. The metric is automatically Einstein-Maxwell, but to interpret $c_1,c_2$, we compute the rest of the curvature (recall from \eqref{SDMaxwell} that the Ricci form is $\rho=4\Omega^2 F^{+}$):
\begin{align}
 \Psi^{\pm}_{2} ={}& -\frac{c_1/2}{(r\mp a\cos\theta)^{3}} 
 - \frac{(a^2+c_{2})}{(r\mp a\cos\theta)^{3}(r\pm a\cos\theta)}, \label{Psi2KN} \\
 F^{\pm} ={}& \frac{[-\frac{1}{4}(a^2+c_2)]^{\frac{1\pm 1}{2}}}{(r\mp a\cos\theta)^{2}}\left[ 
 \d\tau\wedge\d(r\mp a\cos\theta)-\sin\theta\d\varphi\wedge(a\sin\theta\d{r} \mp (r^2-a^2)\d\theta) \right]. \label{RicciKN}
\end{align}
Setting $c_{1}\equiv-2M$, $a^2+c_{2}\equiv Q^2$, we recognise the Euclidean Kerr-Newman solution.

\section{The Page-Pope class}
\label{Sec:PPclass}

\subsection{Preliminaries}

Consider a Riemann surface $\Sigma$ with a Riemannian metric $g_{\Sigma}=2h\:\d\zeta\d\bar\zeta$, where $\zeta=\frac{1}{\sqrt{2}}(x+\i y)$ is a holomorphic coordinate and $h$ is a real positive function. Let $\kappa_{\Sigma}=\i h \: \d\zeta\wedge\d\bar\zeta$ be the K\"ahler form. Since $\d\kappa_{\Sigma}=0$, there is, locally, a 1-form $A$ such that $\kappa_{\Sigma}=\d A$ and a K\"ahler potential $K_{\Sigma}$ with $h=\partial_{\zeta}\partial_{\bar\zeta}K_{\Sigma}$. We now define a manifold $M$ as the total space of a fibre bundle over $\Sigma$ with 2-dimensional fibers parametrised by $r,\psi$, and we introduce a Riemannian metric $g$ on $M$ by
\begin{align}
 g = F(r)\d{r}^2+G(r)(\d\psi+A)^2+H(r)g_{\Sigma}
 \label{PPclass}
\end{align}
where $F,G,H$ are arbitrary (non-negative) functions of $r$. Note that, by redefining the coordinate $r$, the three functions $F,G,H$ can be reduced to two. For the moment we will focus on the form \eqref{PPclass}, but we will later make use of this freedom.

The class of metrics \eqref{PPclass} includes geometries such as Fubini-Study, Eguchi-Hanson, Taub-NUT, K\"ahler surfaces of Calabi type (cf. \cite{ACG, Lucietti}), particular cases of the Bianchi IX class, etc. It is the restriction to four dimensions of the geometries considered by Page and Pope in \cite{PP}. In \cite{PP}, the conditions on the functions $F,G,H$ so that the metric \eqref{PPclass} is Einstein are determined, and they find that, under the Einstein assumption, the metric is conformal to two different K\"ahler metrics. We will first show that this ambi-K\"ahler structure is actually independent of the form of $F,G,H$, and so it is independent of field equations; then we will use this result to study generalised instantons.

\begin{proposition}
For any functions $F,G,H$, the class of metrics \eqref{PPclass} is (locally) ambi-K\"ahler.
\end{proposition}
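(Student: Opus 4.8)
The plan is to exhibit two distinct almost-complex structures, one self-dual and one anti-self-dual, show that each is integrable, and verify that each conformal rescaling of $g$ makes the corresponding fundamental 2-form closed (hence parallel). This mirrors exactly the strategy used in the two worked examples (spherical symmetry and Kerr-Newman) in Section~\ref{Sec:Examples}, so I would follow that template closely.

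First I would introduce an orthonormal coframe adapted to the structure of \eqref{PPclass}. A natural choice is
$\beta^0 = \sqrt{G}\,(\d\psi+A)$, $\beta^1 = \sqrt{F}\,\d r$, and then a pair $\beta^2,\beta^3$ diagonalising the fibre metric $H\,g_\Sigma$; since $g_\Sigma = 2h\,\d\zeta\,\d\bar\zeta$ with $\zeta=\tfrac{1}{\sqrt2}(x+\i y)$, one may take $\beta^2 = \sqrt{H h}\,\d x$, $\beta^3 = \sqrt{H h}\,\d y$ (up to the precise normalisation of $h$). Then I define the candidate fundamental 2-forms $\kappa^{\pm}=\beta^0\wedge\beta^1 \mp \beta^2\wedge\beta^3$, with $\kappa^+$ self-dual and $\kappa^-$ anti-self-dual, and the associated almost-complex structures $(J_\pm)^a{}_b = \kappa^{\pm}_{bc}g^{ca}$, exactly as in the examples.

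The two substantive steps are integrability and the closedness of the rescaled 2-forms. For integrability I would write down explicit type-$(1,0)$ coframes $\ell = \tfrac{1}{\sqrt2}(\beta^0+\i\beta^1)$ and $m^{\pm}=\tfrac{1}{\sqrt2}(\beta^2\mp\i\beta^3)$ and seek functions $a^\alpha_\pm,b^\alpha_\pm$ of $r$ (and possibly $\Sigma$) such that the forms $a^\alpha_\pm\ell + b^\alpha_\pm m^{\pm}$ are closed; closure of a basis of $(1,0)$-forms is equivalent to integrability of $J_\pm$ by the Newlander-Nirenberg criterion. Here the key simplification is that $A$ enters only through $\kappa_\Sigma=\d A$, and $g_\Sigma$ being Kähler on the base gives $\d\beta^2\wedge,\,\d\beta^3$ the structure needed; I expect the required $a^\alpha_\pm,b^\alpha_\pm$ to depend only on $r$ (analogous to the spherical case), which is what makes the computation tractable. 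For the conformal-Kähler property I would posit conformal factors $\Omega_\pm$ depending only on $r$ and check directly that $\d[\Omega_\pm^2\kappa^\pm]=0$; expanding $\d(\Omega_\pm^2\beta^0\wedge\beta^1)$ and $\d(\Omega_\pm^2\beta^2\wedge\beta^3)$ and using $\kappa_\Sigma=\d A$, this reduces to a pair of first-order ODEs in $r$ that fix $\Omega_\pm$ up to scale.

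The main obstacle I anticipate is disentangling how the base-dependent parts contribute: the fibre 2-form $\beta^2\wedge\beta^3 = H h\,\d x\wedge\d y = H\kappa_\Sigma$ carries $r$-dependence through $H$ and base-dependence through $\kappa_\Sigma=\d A$, and one must check that these combine so that closedness of $\Omega_\pm^2\kappa^\pm$ imposes only an $r$-equation and no constraint on the base geometry. Because $\kappa_\Sigma$ is itself closed and $h$ is harmonic-like only through the Kähler potential, I expect the base terms to drop out cleanly, leaving genuinely $F,G,H$-independent integrability and an explicitly solvable ODE for $\Omega_\pm$; this is precisely why the ambi-Kähler property holds for \emph{arbitrary} $F,G,H$, with no field equation invoked. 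Once $\Omega_\pm$ and the complex structures are in hand, the vectors $\xi^a_\pm$ from \eqref{CandKAK} can be computed and one reads off $\xi^a_\pm\partial_a = \partial_\psi$, confirming the Killing property as in the examples, though strictly only integrability plus closedness of $\Omega_\pm^2\kappa^\pm$ is needed for the ambi-Kähler claim itself.
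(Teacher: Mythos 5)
Your proposal follows essentially the same route as the paper's proof: the same coframe, the same $\kappa^\pm=\beta^0\wedge\beta^1\mp\beta^2\wedge\beta^3$, integrability via closed type-$(1,0)$ combinations $a^\alpha_\pm\ell+b^\alpha_\pm m^\pm$, and an $r$-ODE determining $\Omega_\pm^2=\tfrac{c_\pm}{H}\exp\bigl[\pm\int\tfrac{\sqrt{FG}}{H}\d r\bigr]$. The only small caveat is that the coefficients are not purely $r$-dependent (the paper's $b^0_\pm$ involves $\partial_\zeta K_\Sigma$), but you allowed for base dependence, so the argument goes through as outlined.
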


\begin{proof}
Write the metric on the Riemann surface as $g_{\Sigma}=h(\d{x}^2+\d{y}^2)$, and choose the coframe $\beta^0=\sqrt{G}(\d\psi+A)$, $\beta^1=\sqrt{F}\d{r}$, $\beta^2=\sqrt{Hh} \: \d{x}$, $\beta^3=\sqrt{Hh} \: \d{y}$. We define the fundamental 2-forms (with opposite orientation)
\begin{align}
\kappa^{\pm}=\beta^{0}\wedge\beta^{1}\mp\beta^{2}\wedge\beta^{3}=\sqrt{FG}(\d\psi+A)\wedge\d{r} \mp H\kappa_{\Sigma},
\label{ffPP}
\end{align}
and the associated almost-complex structures $(J_{\pm})^{a}{}_{b}=\kappa^{\pm}_{bc}g^{ca}$. Type-$(1,0)$ forms for $J_{\pm}$ are $\ell=\frac{1}{\sqrt{2}}(\beta^{0}+\i\beta^{1})$, $m^{\pm}=\frac{1}{\sqrt{2}}(\beta^{2}\mp\i\beta^{3})$. Let $a^{0}_{\pm}=\frac{1}{\sqrt{G}}$, $b^{0}_{\pm}=\frac{\mp\i}{\sqrt{2Hh}}\partial_{\zeta}K_{\Sigma}$, $a^{1}_{\pm}=0$, $b^{1}_{\pm}=\frac{1}{\sqrt{Hh}}$. Then a short calculation shows that the type-$(1,0)$ forms $a^{\alpha}_{\pm}\ell+b^{\alpha}_{\pm}m^{\pm}$ ($\alpha=0,1$) are closed, so $J_{+}$ and $J_{-}$ are both integrable. Furthermore, using \eqref{ffPP} we find that regardless of the form of $F,G,H$ we always have
\begin{align}
\d[\Omega_{\pm}^{2}\kappa^{\pm}] = 0, \qquad 
\Omega_{\pm}^{2} \equiv \frac{c_{\pm}}{H}\exp\left [\pm \: \int\frac{\sqrt{FG}}{H}\d{r}\right], 
\label{CKPP}
\end{align}
where $c_{\pm}$ are arbitrary constants. Thus \eqref{PPclass} is always ambi-K\"ahler.
\end{proof}

The vector fields \eqref{CandKAK} are
\begin{align}
 \xi^{a}_{\pm}\partial_{a} = \frac{1}{\sqrt{FG}}\frac{\d(\Omega^{-1}_{\pm})}{\d{r}}\partial_{\psi}.
 \label{candidatesKilling}
\end{align}
Since $\partial_{\psi}$ is a Killing vector of \eqref{PPclass}, we see that \eqref{candidatesKilling} are in general not Killing. Requiring \eqref{candidatesKilling} to be Killing imposes restrictions on $\Omega_{\pm}$, which means restrictions on $F,G,H$.

\subsection{Conformally self-dual solutions}
\label{sec:PPCSD}

Define
\begin{align}
\hat{F}_{\pm}:=\Omega^{2}_{\pm}F, \qquad \hat{G}_{\pm}:=\Omega^{2}_{\pm}G, \qquad \hat{H}_{\pm}:=\Omega^{2}_{\pm}H.
\label{hattedquantitiesPP}
\end{align}
Using \eqref{CKPP}, we see that $\d{\hat{H}_{\pm}}=\pm\sqrt{\hat{F}_{\pm}\hat{G}_{\pm}}\d{r}$. Thus, if we further define
\begin{align}
 \hat{z}_{\pm}:=-\hat{H}_{\pm}, \qquad \hat{W}_{\pm}:=\hat{G}_{\pm}^{-1}, \qquad e^{\hat{u}_{\pm}} := \hat{G}_{\pm}\hat{H}_{\pm}h, \label{TVPP}
\end{align}
then the K\"ahler metrics $\hat{g}^{\pm}=\Omega_{\pm}^{2}g$ and K\"ahler forms $\hat{\kappa}^{\pm}=\Omega_{\pm}^{2}\kappa$ become  
\begin{align}
 \hat{g}^{\pm} ={}& \hat{W}_{\pm}^{-1}(\d\psi+A)^{2} + \hat{W}_{\pm}[\d\hat{z}_{\pm}^{2}+e^{\hat{u}_{\pm}}(\d{x}^{2}+\d{y}^{2})], \\
 \hat{\kappa}^{\pm} ={}&\mp[ (\d\psi+A)\wedge\d\hat{z}_{\pm} + \hat{W}_{\pm}e^{\hat{u}_{\pm}}\d{x}\wedge\d{y} ].
\end{align}
A straightforward calculation using \eqref{Todaghat} gives
\begin{align}
 \hat{R}_{\pm} = \frac{1}{\hat{H}_{\pm}}\left[ R_{\Sigma} +\frac{\d^2}{\d\hat{z}_{\pm}^{2}}(\hat{G}_{\pm}\hat{z}_{\pm}) \right], \qquad
 R_{\Sigma}:=-h^{-1}(\partial^{2}_{x}+\partial^{2}_{y})\log h.
 \label{RPP}
\end{align}
The conformal (A)SD equations then reduce to 
\begin{align}
\frac{\d^2}{\d\hat{z}_{\pm}^{2}}(\hat{G}_{\pm}\hat{z}_{\pm})=-R_{\Sigma}. \label{ASDeqPP}
\end{align}
The left side is a function of $\hat{z}_{\pm}$ only, while the right side is a function of $(x,y)$ only. Thus, \eqref{ASDeqPP} demands $R_{\Sigma}$ to be constant. Assuming $\Sigma$ to be simply connected, this implies that $g_{\Sigma}$ is isometric to the standard metric of either the 2-sphere ($R_{\Sigma}>0$), the Euclidean 2-plane ($R_{\Sigma}=0$), or the hyperbolic plane ($R_{\Sigma}<0$). The solution to \eqref{ASDeqPP} is $\hat{G}_{\pm}\hat{z}_{\pm} =-\tfrac{R_{\Sigma}}{2} \hat{z}^2_{\pm} + b_{\pm} \hat{z}_{\pm} - a_{\pm}$, so  
\begin{align}
 \hat{G}_{\pm} = -\frac{R_{\Sigma}}{2}\hat{z}_{\pm} + b_{\pm} - \frac{a_{\pm}}{\hat{z}_{\pm}},
\end{align}
where $a_{\pm},b_{\pm}$ are arbitrary constants. Recalling $\hat{z}_{\pm}=-\hat{H}_{\pm}$, expressing the above equation in terms of $G,H,\Omega_{\pm}$, and summarising:

\begin{theorem}\label{Result:SDPP}
The metric \eqref{PPclass} is a solution to the conformally (A)SD equations $*C_{abcd}=\mp C_{abcd}$ iff $g_{\Sigma}$ has constant curvature $R_{\Sigma}$ and the functions $F,G,H$ satisfy
\begin{align}
 G = \frac{R_{\Sigma}}{2}H + \frac{b_{\pm}}{\Omega_{\pm}^{2}} + \frac{a_{\pm}}{\Omega_{\pm}^{4}H}. 
 \label{PPSDsol}
\end{align}
where $a_{\pm},b_{\pm}$ are arbitrary constants and $\Omega_{\pm}$ are defined in \eqref{CKPP}.
\end{theorem}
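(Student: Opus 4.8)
The plan is to exploit the ambi-K\"ahler structure established in the Proposition to collapse the conformal (anti-)self-duality condition---a priori a constraint on all components of the Weyl tensor---into a single scalar ODE. Because \eqref{PPclass} is ambi-K\"ahler for \emph{every} choice of $F,G,H$, the two conformal K\"ahler structures force the Weyl tensor to be of type $D\otimes D$, whose only nontrivial pieces are the scalars $\Psi_2^\pm$; the condition $*C_{abcd}=\mp C_{abcd}$ is then equivalent to the vanishing of the single scalar $\Psi_2^\pm$. Invoking \eqref{Psi2AK}, $\Psi_2^\pm=\Omega_\pm^2\hat R_\pm/12$, I would recast conformal (A)SD as the scalar-flatness condition $\hat R_\pm=0$ of the K\"ahler metric $\hat g^\pm=\Omega_\pm^2 g$. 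I stress that this reduction does not require the vectors \eqref{candidatesKilling} to be Killing: as noted below \eqref{Todag}, the scalar-curvature formula \eqref{Todaghat} is valid for the K\"ahler metric in LeBrun form irrespective of whether $\xi_\pm$ is Killing, and $\partial_\psi$ is manifestly Killing for \eqref{PPclass}.

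First I would bring each $\hat g^\pm$ into LeBrun normal form. From \eqref{CKPP} one extracts the key relation $\d\hat H_\pm=\pm\sqrt{\hat F_\pm\hat G_\pm}\,\d r$, which motivates the Toda variables \eqref{TVPP}, namely $\hat z_\pm=-\hat H_\pm$, $\hat W_\pm=\hat G_\pm^{-1}$, $e^{\hat u_\pm}=\hat G_\pm\hat H_\pm h$. With these, $\hat g^\pm$ and $\hat\kappa^\pm$ acquire the standard normal form, and substituting into \eqref{Todaghat} produces the compact expression \eqref{RPP},
\begin{align*}
 \hat R_\pm=\frac{1}{\hat H_\pm}\left[R_\Sigma+\frac{\d^2}{\d\hat z_\pm^2}\big(\hat G_\pm\hat z_\pm\big)\right],
\end{align*}
in which $R_\Sigma=-h^{-1}(\partial_x^2+\partial_y^2)\log h$ is the curvature of the base surface. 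Imposing $\hat R_\pm=0$ then yields equation \eqref{ASDeqPP}.

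The decisive step is a separation-of-variables argument: in \eqref{ASDeqPP} the left-hand side depends only on the fibre variable $\hat z_\pm$ (equivalently $r$), whereas the right-hand side depends only on the base coordinates $(x,y)$. Both sides must therefore equal a common constant, which forces $R_\Sigma$ to be constant; by uniformisation of simply connected surfaces, $g_\Sigma$ is then the round sphere, the flat plane, or the hyperbolic plane. Integrating the ODE twice gives $\hat G_\pm\hat z_\pm=-\tfrac{R_\Sigma}{2}\hat z_\pm^2+b_\pm\hat z_\pm-a_\pm$, i.e.\ $\hat G_\pm=-\tfrac{R_\Sigma}{2}\hat z_\pm+b_\pm-a_\pm/\hat z_\pm$, with integration constants $a_\pm,b_\pm$. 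Undoing the substitutions $\hat z_\pm=-\hat H_\pm=-\Omega_\pm^2 H$ and $\hat G_\pm=\Omega_\pm^2 G$ recovers the stated formula \eqref{PPSDsol}. Since every reduction in the chain is an equivalence, both directions of the ``iff'' follow at once.

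I expect the obstacle to be computational rather than conceptual: the real work is verifying \eqref{RPP} from \eqref{Todaghat}, which requires expressing $\hat R_\pm$ through the Toda data \eqref{TVPP} and recognising that the base Laplacian of $\log h$ reproduces exactly $R_\Sigma$. Once \eqref{RPP} is in hand the separation and integration are immediate, and the only remaining care is the bookkeeping of the two orientations $\pm$ and the matching of the sign in $*C=\mp C$ to the correct scalar $\Psi_2^\pm$ and to the integration constants $a_\pm,b_\pm,\Omega_\pm$ appearing in \eqref{PPSDsol}.
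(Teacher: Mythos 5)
Your proposal is correct and follows essentially the same route as the paper: reduce conformal (A)SD to $\hat R_\pm=0$ via $\Psi_2^\pm=\Omega_\pm^2\hat R_\pm/12$, pass to the Toda variables \eqref{TVPP} to obtain \eqref{RPP}, separate variables to force $R_\Sigma$ constant, and integrate the resulting ODE. The only addition is your explicit justification that the reduction does not need $\xi_\pm$ to be Killing, which the paper handles by the remark below \eqref{Todag}.
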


\begin{remark}[Classification]\label{Remark:ClassificationPP}
It is worth mentioning a different perspective on the above solutions.
From \eqref{TVPP}, the function $\hat{u}_{\pm}$ is ``separable'' in the sense that $\hat{u}_{\pm}=f(x,y)+g(z)$, where $f(x,y)=\log h$ and $g(z)=\log(\hat{G}_{\pm}\hat{H}_{\pm})$. Thus, we solved the Toda equation (here $\hat{R}_{\pm}=0$) when the Toda variable is separable. All separable solutions to the Toda equation were classified by Tod in \cite{Tod95}: the classification is in terms of three constants $k,a,b$, which in our notation are $k\equiv-\frac{R_{\Sigma}}{2}$, $a\equiv b_{\pm}$, $b\equiv-a_{\pm}$.
\end{remark}

Let us see some examples. In all three examples that follow, we take $\Sigma=\CP^1\cong S^2$ with the round 2-metric, which has $R_{\Sigma}=2$.

\paragraph{Fubini-Study.}
Taking the functions $F = \frac{1}{(1+r^2)^2}$, $G = \frac{r^2}{4(1+r^2)^2}$, $H = \frac{r^2}{4(1+r^2)}$, the ambi-K\"ahler class \eqref{PPclass} becomes the Fubini-Study metric on $M=\mathbb{CP}^2$. Putting $c_{+}=\frac{1}{4}$, $c_{-}=4$ in \eqref{CKPP}, we find $\Omega^{2}_{+}=1$ and $\Omega^{2}_{-}=H^{-2}$, so the metric is actually K\"ahler w.r.t. $J_{+}$ (and of course conformally K\"ahler w.r.t. $J_{-}$). We have $\hat{G}_{+}=G$, $\hat{H}_{+}=H$, $\hat{G}_{-}=\frac{4}{r^2}$, $\hat{H}_{-}=\frac{4(1+r^2)}{r^2}$. This gives $\hat{z}_{\pm}=H^{\pm 1}$. Replacing in \eqref{RPP}, we find $\hat{R}_{+}=24$, $\hat{R}_{-}=0$. This is of course consistent with the fact that Fubini-Study is Einstein with cosmological constant equal to $6$ and has a self-dual Weyl tensor. Using \eqref{candidatesKilling}, we also note that $\xi^{a}_{+}$ vanishes and $\xi^{a}_{-}\partial_{a}\equiv \partial_{\psi}$ is Killing.

\paragraph{Generalised Eguchi-Hanson.}
Let $F=\frac{1}{f(r)}$, $G=\frac{r^2f(r)}{4}$, $H=\frac{r^2}{4}$, where $f(r)$ is an arbitrary function of $r$. The metric \eqref{PPclass} is then a ``generalised Eguchin-Hanson'' space. Setting $c_{+}=\frac{1}{4}$, $c_{-}=4$ in \eqref{CKPP}, we find $\Omega^{2}_{+}=1$ and $\Omega^{2}_{-}=H^{-2}$ (so the metric is K\"ahler w.r.t. $J_{+}$). The form of $f(r)$ that makes the space conformally (A)SD can now be easily found by solving the algebraic equation \eqref{PPSDsol}:
\begin{align}
 *C_{abcd}=\mp C_{abcd} \quad \iff \quad f(r) = 1+b_{\pm}\left(\frac{2}{r}\right)^{\pm 2}+a_{\pm}\left(\frac{2}{r}\right)^{\pm 4}. \label{SDGEH}
\end{align}
We also note that $\xi^{a}_{+}$ vanishes and $\xi^{a}_{-}\partial_{a}\equiv \partial_{\psi}$ is Killing, so the rest of the curvature can be computed using the results of section \ref{Sec:CKinstantons}. The ordinary Eguchi-Hanson instanton corresponds to \eqref{SDGEH} with $*C_{abcd}=-C_{abcd}$, $b_{+}=0$ and $a_{+}=-a/16$ (the Ricci tensor then vanishes and the space is hyper-K\"ahler). The case \eqref{SDGEH} with $*C_{abcd}=-C_{abcd}$ and $b_{+}\neq0$ was studied in \cite{Giribet} in the context of conformal gravity, where the term $4b_{+}/r^2$ is referred to as the $b$-mode.

\paragraph{Generalised Taub-NUT.}
Letting $F=\frac{1}{f(r)}$, $G=4n^2f(r)$, $H=r^2-n^2$, where $f(r)$ is an arbitrary function of $r$ and $n$ is a constant,  the metric \eqref{PPclass} is a ``generalised Taub-NUT'' space. Putting $c_{\pm}=1$ in \eqref{CKPP}, we find $\Omega_{\pm}^{2}=(r\pm n)^{-2}$. The algebraic equation \eqref{PPSDsol} now gives:
\begin{align}
 *C_{abcd} = \mp C_{abcd} \quad \iff \quad f(r) = \frac{1}{4n^2}\left[ r^2 - n^2 +b_{\pm}(r\pm n)^{2}+a_{\pm}\frac{(r\pm n)^{3}}{(r\mp n)} \right].
\end{align}
Using \eqref{candidatesKilling}, we get $\xi^{a}_{\pm}\partial_{a}=\frac{1}{2n}\partial_{\psi}$, so $\xi^{a}_{+}=\xi^{a}_{-}$ is Killing and we can compute the rest of the curvature using the results of section \ref{Sec:CKinstantons}.

\subsection{Cosmological Einstein-Maxwell solutions}
\label{Sec:EMPP}

For concreteness, let us focus on the ASD side $\kappa^{-}$. Introducing new variables $(z,W,u)$ by
\begin{align}
\d{z}=\sqrt{FG}\d{r}, \qquad W=G^{-1}, \qquad e^{u} = GHh, \label{TodaVariablesPP}
\end{align}
the metric \eqref{PPclass} and fundamental 2-form $\kappa^{-}$ \eqref{ffPP} adopt the form \eqref{metric-gen}-\eqref{F2form-gen}. From \eqref{candidatesKilling}, we get $\xi^{a}_{-}\partial_{a}=\frac{\d(\Omega^{-1}_{-})}{\d{z}}\partial_{\psi}$. To apply the construction of section \ref{Sec:CKinstantons}, we need to restrict to the case in which $\xi^{a}_{-}$ is Killing. This is true iff $\d(\Omega_{-}^{-1})/\d{z}$ is a constant.   Given that $z$ and $\Omega_{-}$ are defined up to addition and multiplication by a constant respectively, we can then simply set $\Omega_{-}^{-1}\equiv z$. From the conformally K\"ahler condition $\Omega_{-}^{2}+\frac{\d}{\d{z}}(\Omega^2_{-}H)=0$ it follows that $\frac{\d}{\d{z}}(\frac{H}{z^2})=-\frac{1}{z^2}$, so we deduce
\begin{align}
 H= z + k z^2, \label{HEML}
\end{align}
where $k$ is an arbitrary constant. Now, from \eqref{CKPP} we have $\Omega^{2}_{+} = \frac{c_{+}c_{-}}{(H\Omega_{-})^2}$. Setting $c_{+}c_{-}=k$ for later convenience, we deduce $\Omega_{+}^{-1}=\frac{(1+kz)}{k}$. This implies $\d(\Omega^{-1}_{+})/\d{z}=1$, so $\xi^{a}_{+}\equiv\xi^{a}_{-}$.

The cosmological Einstein-Maxwell equations are $R=4\lambda$, where $R$ is given by \eqref{Todag}. Using that formula and the definitions \eqref{TodaVariablesPP}, we find
\begin{align}
 R = \frac{1}{H}\left[ R_{\Sigma} - \frac{\d^2(GH)}{\d{z}^{2}} \right] \label{RPPEM}
\end{align}
where $R_{\Sigma}$ was defined in \eqref{RPP}. It is convenient to have a formulation that is more symmetric in the SD and ASD sides. To this end, introduce $\varrho$ by $z = \frac{1}{\sqrt{k}}(\varrho-n)$,  where $n:=\frac{1}{2\sqrt{k}}$. Then $H=\varrho^2-n^2$ and  $\Omega_{\pm}^{-1}=\frac{1}{\sqrt{k}}(\varrho\pm n)$. 
The equation $R=4\lambda$ can be easily solved: from \eqref{RPPEM}, we find that $R_{\Sigma}$ must be constant and 
\begin{align}
 kG = \frac{-\frac{\lambda}{3}\varrho^4+(\frac{R_{\Sigma}}{2}+2\lambda n^2)\varrho^2+\alpha\varrho+\beta}{\varrho^2-n^2}, \label{GPP}
\end{align}
where $\alpha,\beta$ are arbitrary constants of integration. The solution then depends on 5 parameters: $k$ (or $n$), $R_{\Sigma},\lambda,\alpha,\beta$. To interpret them, we compute the rest of the curvature.

Recalling formulas \eqref{Psi2AK} and \eqref{RPP}, and using $\hat{G}_{\pm}\hat{z}_{\pm}=-\Omega_{\pm}^{4}GH$, we have
\begin{align*}
 \Psi_{2}^{\pm}=\frac{1}{12(\varrho^2-n^2)}\left[R_{\Sigma}-(\varrho\pm n)^2\tfrac{\d}{\d\varrho}\left((\varrho\pm n)^2\tfrac{\d}{\d\varrho}\left( \tfrac{kGH}{(\varrho\pm n)^4}\right) \right)\right].
\end{align*}
Using \eqref{GPP}, we find
\begin{align}
 \Psi_{2}^{\pm}=-\frac{\frac{1}{2}(\alpha\mp(R_{\Sigma}n+\frac{8}{3}\lambda n^3))}{(\varrho\pm n)^3}
 - \frac{(\beta-\frac{R_{\Sigma}}{2}n^2-\lambda n^4)}{(\varrho\mp n)(\varrho\pm n)^{3}}.
 \label{Psi2PP}
\end{align}
The SD piece of the Maxwell field is $F^{+}=\frac{z^{2}}{4}(\rho-\lambda\kappa)$. Recalling \eqref{TraceFreeRF}, we get
\begin{align}
F^{+}=-\frac{1}{4k}\frac{(\beta-\frac{R_{\Sigma}}{2}n^2-\lambda n^4)}{(\varrho+n)^2}\left[(\d\psi+A)\wedge\tfrac{\d\varrho}{\sqrt{k}}-(\varrho^2-n^2)h\d{x}\wedge\d{y} \right].
\label{SDMPP}
\end{align}
Formulas \eqref{Psi2PP}-\eqref{SDMPP} suggest to define
\begin{align}
 Q:=\beta-\tfrac{R_{\Sigma}}{2}n^2-\lambda n^4, \qquad \mu:=-\tfrac{1}{2}\alpha, \qquad \nu:=\tfrac{1}{2}(R_{\Sigma}n+\tfrac{8}{3}\lambda n^3),
\end{align}
and to identify $Q$ with ``electromagnetic charge'', $\mu$ with ``mass'', and $\nu$ with a sort of ``NUT charge''. The geometry is Einstein iff $Q=0$, and it is conformally SD iff $\mu=\nu$ and $Q=0$. In the latter case, the space is actually quaternionic-K\"ahler. The hyper-K\"ahler case  corresponds to $Q=\mu-\nu=\lambda=0$ , and, assuming $\Sigma=\CP^1$ (so $R_{\Sigma}=2$), it reduces to the Taub-NUT instanton with NUT charge $\nu=n$.

\section{The Pleba\'nski-Demia\'nski class}
\label{Sec:PD}

\subsection{$SU(\infty)$ Toda formulation}

The Pleba\'nski-Demia\'nski ansatz \cite{PD} is the 4-dimensional family of Riemannian metrics given in local real coordinates $(\tau,\phi,p,q)$ by
\begin{align}
g = \frac{1}{(p-q)^2}\left[ -Q\frac{(\d\tau-p^2\d\phi)^2}{(1-p^2q^2)}+P\frac{(\d\phi - q^2\d\tau)^2}{(1-p^2q^2)}+(1-p^2q^2)\left(\frac{\d{p}^2}{P}-\frac{\d{q}^2}{Q} \right) \right], \label{PDmetric}
\end{align}
where $P$ and $Q$ are arbitrary functions of $p$ and $q$ respectively, and we assume $P>0$, $Q<0$. The vector fields $\partial_{\tau}, \partial_{\phi}$ are Killing. We will first show that, similarly to \eqref{PPclass}, the geometry \eqref{PDmetric} is ambi-K\"ahler regardless of the form of $P,Q$. 

Consider the following orthonormal coframe:
\begin{equation}
\begin{aligned}
 \beta^{0} ={}& \tfrac{1}{(p-q)}\sqrt{\tfrac{-Q}{1-p^2q^2}}(\d\tau-p^2\d\phi), &
\beta^{1} =& \tfrac{1}{(p-q)}\sqrt{\tfrac{1-p^2q^2}{-Q}}\d{q}, \\
 \beta^{2} ={}&  \tfrac{1}{(p-q)}\sqrt{\tfrac{1-p^2q^2}{P}}\d{p}, &
 \beta^{3} =& \tfrac{1}{(p-q)}\sqrt{\tfrac{P}{1-p^2q^2}}(\d\phi - q^2\d\tau).
\end{aligned}
\end{equation}
Defining the 2-forms
\begin{align}
\kappa^{\pm}=\beta^0\wedge\beta^1\mp\beta^2\wedge\beta^3 = 
\frac{(\d\tau-p^2\d\phi)\wedge\d{q}\mp\d{p}\wedge(\d\phi-q^2\d\tau)}{(p-q)^2}
\end{align}
($\kappa^{+}$ is SD and $\kappa^{-}$ is ASD), the tensor fields $(J_{\pm})^{a}{}_{b}=\kappa^{\pm}_{bc}g^{ca}$ are almost-complex structures with opposite orientation. The type-$(1,0)$ eigenspace of $J_{\pm}$ is spanned by $\ell=\frac{1}{\sqrt{2}}(\beta^0+\i\beta^1)$, $m^{\pm}=\frac{1}{\sqrt{2}}(\beta^2\mp\i\beta^3)$. Setting
\begin{align}
 a^{0}_{\pm}:=\tfrac{(p-q)}{\sqrt{-Q(1-p^2q^2)}}, \quad 
 b^{0}_{\pm} := \pm \i \tfrac{(p-q)p^2}{\sqrt{P(1-p^2q^2)}}, \quad 
 a^{1}_{\pm}:= \i \tfrac{(p-q)q^2}{\sqrt{-Q(1-p^2q^2)}}, \quad 
 b^{1}_{\pm} := \mp \tfrac{(p-q)}{\sqrt{P(1-p^2q^2)}},
\end{align}
a straightforward calculation shows that the 1-forms $\omega^{\alpha}_{\pm}:=a^{\alpha}_{\pm}\ell+b^{\alpha}_{\pm}m^{\pm}$ (with $\alpha=0,1$) are 
\begin{align}
\omega^{0}_{\pm} = \frac{1}{\sqrt{2}}\left(\d\tau-\i\frac{\d{q}}{Q}\pm \i\frac{p^2\d{p}}{P}\right), \qquad 
\omega^{1}_{\pm} = \frac{1}{\sqrt{2}}\left(\i\d\phi+\frac{q^2\d{q}}{Q} \mp \frac{\d{p}}{P}\right), \label{10formsPD}
\end{align}
so $\d\omega^{\alpha}_{\pm}=0$. Since $\omega^{\alpha}_{\pm}$ are type-(1,0) forms for $J_{\pm}$, we see that both almost-complex structures $J_{\pm}$ are integrable\footnote{Note that from \eqref{10formsPD} we can also read off holomorphic coordinates for $J_{\pm}$, since $\omega^{\alpha}_{\pm}\equiv\d{z}^{\alpha}_{\pm}$.}. Additionally, we find 
\begin{align}
 \d[\Omega_{\pm}^{2}\kappa^{\pm}] = 0, \qquad \Omega_{\pm} = \frac{p-q}{1\pm pq},
\end{align}
thus, the class of metrics \eqref{PDmetric} is ambi-K\"ahler for any functions $P(p), Q(q)$. A computation shows that the vector fields defined by \eqref{CandKAK} are
\begin{align}
 \xi^{a}_{\pm}\partial_{a} = \partial_{\tau} \mp \partial_{\phi},
\end{align}
so both $\xi^{a}_{\pm}$ are Killing. We can then compute the Toda variables \eqref{TVAK}:
\begin{equation}
\begin{aligned}
& z_{\pm}=\frac{1\pm pq}{p-q}, \qquad W_{\pm}^{-1} = \frac{(1\pm q^2)^2P-(1\pm p^2)^2Q}{(p-q)^2(1-p^2q^2)}, \qquad e^{u_{\pm}} = \frac{-PQ}{(p-q)^4}, \\
& \d{x}_{\pm} = \frac{(1\pm p^2)}{P}\d{p}-\frac{(1\pm q^2)}{Q}\d{q}, \qquad y_{\pm} = -\tau \mp \phi.
\label{TVPD}
\end{aligned}
\end{equation}
Note that $\partial_{y_{\pm}}$ are Killing. We will now use this formulation to study field equations.

\subsection{A conformally self-dual Pleba\'nski-Demia\'nski space}

\begin{theorem}\label{Result:SDPD}
The metric \eqref{PDmetric} satisfies the conformally SD equation $*C_{abcd}=C_{abcd}$ if and only if the functions $P$ and $Q$ are given by
\begin{equation}\label{PQSD}
\begin{aligned}
 P ={}& a_{0}+a_{1}p+a_{2}p^{2}+a_{3}p^{3}+a_{4}p^{4}, \\
 Q={}& a_{4}+a_{3}q+a_{2}q^{2}+a_{1}q^{3}+a_{0}q^{4},
\end{aligned}
\end{equation}
where $a_{0},...,a_{4}$ are arbitrary constants. The solution is conformally half-flat but generically non-Einstein. Furthermore, the space is:
\begin{enumerate}
\item[$(i)$] Quaternionic-K\"ahler (i.e. Einstein) iff $a_1=a_3$, 
\item[$(ii)$] Hyper-K\"ahler (i.e. Ricci-flat) iff $a_{1}=a_{3}$ and $a_{0}=a_{4}$,
\item[$(iii)$] Flat iff $a_{1}=a_{3}=0$ and $a_{0}=a_{4}$.
\end{enumerate}
\end{theorem}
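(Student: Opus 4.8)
The plan is to reduce conformal self-duality to a single scalar ($SU(\infty)$ Toda) equation and then solve it with the auxiliary-potential trick already used for Kerr--Newman.

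First I would exploit the ambi-K\"ahler structure established above: the Weyl tensor is type $D\otimes D$, so its anti-self-dual part is governed entirely by the single scalar $\Psi_2^-$. By \eqref{Psi2AK} we have $\Psi_2^-=\Omega_-^2\hat R_-/12$, hence $*C_{abcd}=C_{abcd}$ (vanishing ASD Weyl) is equivalent to the scalar-flatness $\hat R_-=0$ of the K\"ahler metric $\hat g^-=\Omega_-^2 g$. By \eqref{Todaghat} this is the Toda equation for $\hat u_-$, and since $\partial_{y_-}$ is Killing the $y_-$-derivative drops, leaving
\begin{align*}
\hat u_{-,x_-x_-}+(e^{\hat u_-})_{\hat z_-\hat z_-}=0 .
\end{align*}
From \eqref{TVPD} and \eqref{hattedvariables} the relevant data are explicit: $\hat z_-=-(p-q)/(1-pq)$ and $e^{\hat u_-}=-PQ/(1-pq)^4$, together with $\partial_p x_-=(1-p^2)/P$, $\partial_q x_-=-(1-q^2)/Q$ and $\partial_p\hat z_-=-(1-q^2)/(1-pq)^2$, $\partial_q\hat z_-=(1-p^2)/(1-pq)^2$.

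Next I would introduce, exactly as in \eqref{trickKN}, an auxiliary potential $\sigma$ by $\hat u_{-,x_-}=\sigma_{\hat z_-}$ and $(e^{\hat u_-})_{\hat z_-}=-\sigma_{x_-}$, whose existence is equivalent to the Toda equation. Translating these two first-order relations into the $(p,q)$ coordinates via the Jacobian above, and then solving the resulting linear system for $\partial_p\sigma$ and $\partial_q\sigma$ separately (mirroring the passage from \eqref{xz}--\eqref{zx} to the clean system in the Kerr--Newman case), should produce expressions in which the $p$- and $q$-dependence decouples. Imposing the integrability condition $\partial_q\partial_p\sigma=\partial_p\partial_q\sigma$ then separates into one ODE for $P(p)$ and one for $Q(q)$, each forcing a quartic, while the common separation constant ties their coefficients in the reversed pattern $Q(q)=q^4P(1/q)$, i.e.\ \eqref{PQSD}. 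This change of variables, together with the verification that the cross-differentiation genuinely separates (and produces precisely the reversed-coefficient coupling rather than two independent quartics), is the main obstacle: the $1/\det$ factors in the inverse Jacobian mix $p$ and $q$, so the decoupling is not manifest and must be engineered by the same substitution used for Kerr--Newman.

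Finally I would read off the special cases from the curvature formulas of Section \ref{Sec:CKinstantons} applied to the quartics \eqref{PQSD}. Since Einstein is equivalent to the vanishing of the trace-free Ricci tensor, I would impose \eqref{TraceFreeRF}$=0$; this should collapse to the single algebraic condition $a_1=a_3$, giving $(i)$ (with $\lambda$ then read off from \eqref{Todag}, or from $R=12\,z_-^{-3}W_0^-/W_-$ since $\Psi_2^-=0$). Ricci-flatness adds $R=0$, which via \eqref{Todag} forces $a_0=a_4$ on top of $a_1=a_3$, giving $(ii)$. Flatness requires in addition the vanishing of the remaining Weyl scalar $\Psi_2^+$ from \eqref{Psi2AK}, which on the Ricci-flat family should reduce to $a_1=a_3=0$, giving $(iii)$. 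Each of these is a short algebraic check once \eqref{PQSD} is in hand.
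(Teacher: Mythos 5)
Your proposal is correct and follows essentially the same route as the paper: reduction of $*C_{abcd}=C_{abcd}$ to the Toda equation $\hat u_{xx}+(e^{\hat u})_{\hat z\hat z}=0$ on the ASD side, the auxiliary-potential substitution $\hat u_x=\hat\sigma_{\hat z}$, $(e^{\hat u})_{\hat z}=-\hat\sigma_x$ translated to $(p,q)$, cross-differentiation to force quartics with the reversed-coefficient coupling, and the special cases read off from the curvature formulas of Section \ref{Sec:CKinstantons} (the paper phrases the Einstein condition via $W_0^-/W_-$ rather than \eqref{TraceFreeRF} directly, but this is the same criterion). The separation you flag as the main obstacle does go through exactly as you describe, via successive elimination of $\partial_p\hat\sigma$ and $\partial_q\hat\sigma$ followed by $\partial_q^2\partial_p^2$ applied to the resulting constraint.
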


\begin{remark}
We stress that the conformally SD solution \eqref{PQSD} is different from the self-dual limit of the standard Pleba\'nski-Demia\'nski solution, which is a quaternionic-K\"ahler space corresponding to case $(i)$ above (see the next subsection). The solution \eqref{PQSD} can be regarded (locally) as a generalisation of the standard Pleba\'nski-Demia\'nski space to a self-dual gravitational instanton in conformal gravity.
\end{remark}

\begin{proof}[Proof of Theorem \ref{Result:SDPD}]
Recall that the conformally SD equation $*C_{abcd} = C_{abcd}$ is equivalent to $\hat{R}_{-}=0$, where $\hat{R}_{-}$ is given by \eqref{Todaghat}. For notational convenience, let us denote $x\equiv x_{-}$, $y\equiv y_{-}$, $z\equiv z_{-}$, $u\equiv u_{-}$. Since $\partial_{y}$ is Killing, the ansatz \eqref{PDmetric} satisfies $*C_{abcd} = C_{abcd}$ if and only if
\begin{align}
\hat{u}_{xx} + (e^{\hat{u}})_{\hat{z}\hat{z}}=0, \label{HatTodaPD}
\end{align}
where $\hat{u}=u-4\log{z}$ and $\hat{z}=-\frac{1}{z}$. If one writes the Toda equation \eqref{HatTodaPD} in terms of the variables $p,q,P,Q$ (using \eqref{TVPD} for the $-$ sign) and tries to solve for $P,Q$ by brute force, the equation becomes too complicated and we were not able to solve it in this way. Instead, in order to solve \eqref{HatTodaPD} we recall the trick \eqref{trickKN} that we used to solve the Toda equation in the Kerr-Newman case (section \ref{Sec:Examples}): we introduce an auxiliary variable $\hat\sigma$ by 
\begin{align}
 \hat{u}_{x} = \hat\sigma_{\hat{z}}, \qquad (e^{\hat{u}})_{\hat{z}} = - \hat\sigma_{x}. \label{trickPD}
\end{align}
The vector fields $\partial_{x},\partial_{z}$ can be computed from \eqref{TVPD}: we find
\begin{align}
\partial_{x}=\frac{PQ}{F}\left[(1-p^2)\partial_{p}+(1-q^2)\partial_{q} \right], \qquad 
\partial_{z}=\frac{(p-q)^2}{F}\left[(1-q^2)P\partial_{p}+(1-p^2)Q\partial_{q} \right],
\label{VFPD}
\end{align}
where $F\equiv(1-p^2)^2Q-(1-q^2)^2P$. Noticing that $\partial_{\hat{z}}=z^{2}\partial_{z}$, eqs. \eqref{trickPD} lead, respectively, to
\begin{subequations}
\begin{align}
\frac{(1-p^2)Q\dot{P}}{(1-pq)^2}+\frac{(1-q^2)P\dot{Q}}{(1-pq)^2}+\frac{4(p+q)PQ}{(1-pq)^2} ={}& 
(1-q^2)P\frac{\partial\hat\sigma}{\partial{p}} + (1-p^2)Q\frac{\partial\hat\sigma}{\partial{q}}, \label{hatxzPD} \\
\frac{(1-q^2)\dot{P}}{(1-pq)^2}+\frac{(1-p^2)\dot{Q}}{(1-pq)^2}+\frac{4q(1-q^2)P}{(1-pq)^3}+\frac{4p(1-p^2)Q}{(1-pq)^3} ={}& (1-p^2)\frac{\partial\hat\sigma}{\partial{p}} + (1-q^2)\frac{\partial\hat\sigma}{\partial{q}}, \label{hatzxPD}
\end{align}
\end{subequations}
where $\dot{P}=\frac{\d{P}}{\d{p}}$, $\dot{Q}=\frac{\d{Q}}{\d{q}}$. Now, from \eqref{hatzxPD} we find an expression for $\partial_{p}\hat\sigma$, and we then replace this in \eqref{hatxzPD}. When we do this, $\dot{Q}$ disappears from the resulting equation, leaving us with an equation for $\partial_{q}\hat\sigma$ and $\dot{P}$ only. We then replace this new expression for $\partial_{q}\hat\sigma$ in \eqref{hatzxPD}, and we end up with an equation for $\partial_{p}\hat\sigma$ and $\dot{Q}$ only. Explicitly, we find:
\begin{align*}
 \frac{\partial\hat\sigma}{\partial{q}} = \frac{\dot{P}}{(1-pq)^2} + \frac{4qP}{(1-pq)^3}, \qquad
 \frac{\partial\hat\sigma}{\partial{p}} = \frac{\dot{Q}}{(1-pq)^2} + \frac{4pQ}{(1-pq)^3}.
\end{align*}
Using these equations and the identity $\partial_{p}\partial_{q}\hat\sigma=\partial_{q}\partial_{p}\hat\sigma$, a short calculation leads to 
\begin{align}
(1-pq)^{2}\ddot{Q}+6p(1-pq)\dot{Q}+12p^2Q=(1-pq)^{2}\ddot{P}+6q(1-pq)\dot{P}+12q^2P.
\label{ConstSDPD}
\end{align}
Applying $\partial^{2}_{q}$ to this equation, and then $\partial^{2}_{p}$ to the resulting expression, we get 
\begin{align*}
 q^2\ddddot{Q}-2q\dddot{Q}+2\ddot{Q}=p^2\ddddot{P}-2p\dddot{P}+2\ddot{P},
\end{align*}
which can be rewritten as 
\begin{align*}
q^{3}\frac{\d^2}{\d{q}^2}\left(\frac{\ddot{Q}}{q}\right)=p^{3}\frac{\d^2}{\d{p}^2}\left(\frac{\ddot{P}}{p}\right).
\end{align*}
Since the left side is a function of $q$ only, and the right side is a function of $p$ only, the equation is easy to solve: we find that $P$ and $Q$ must be fourth order polynomials in $p$ and $q$, respectively. In addition, the fact that $P$ and $Q$ must satisfy \eqref{ConstSDPD} imposes relations between the coefficients of the polynomials: this then leads to the form \eqref{PQSD}.

It remains to prove the assertion concerning the special limits $(i),(ii),(iii)$. We find
\begin{align}
\frac{W^{-}_{0}}{W_{-}} = z_{-}^{3}\left[ (a_4-a_0)+\frac{(a_3-a_1)}{2}\frac{(p+q)}{(1+pq)} \right].
\label{W0W}
\end{align}
Now we use Theorem \ref{Result:Ricci}, from where we see that the solution will be Einstein iff $\frac{1}{z^{2}_{-}}\partial_{z_{-}}(\frac{W^{-}_{0}}{W_{-}})=\frac{R}{4}=\lambda$. Since it is conformally self-dual, the Einstein condition will imply that it is quaternionic-K\"ahler. From \eqref{W0W} we see that this is true iff $a_{1}=a_{3}$. The cosmological constant is $\lambda=3(a_4-a_0)$, and the only non-vanishing component of the SD Weyl spinor is
\begin{align}
\Psi^{+}_{2} = -\frac{a_{1}}{z^{3}_{+}}. \label{Psi2+PD}
\end{align}
In addition, the solution will be hyper-K\"ahler iff $R_{ab}=0$, which from the above reduces to $a_1=a_3$ and $a_0=a_{4}$. The only non-trivial part of the curvature is now \eqref{Psi2+PD}. Finally, from these considerations and eq. \eqref{Psi2+PD} we see that the solution will be flat iff $a_1=a_3=0$ and $a_0=a_{4}$. 
\end{proof}

Note that in the flat limit there are still two parameters left ($a_0$ and $a_2$), so we actually get a 2-parameter family of flat metrics, as is expected from the analysis in \cite{PD}.

\subsection{Cosmological Einstein-Maxwell solutions}

Although the Pleba\'nski-Demia\'nski solution to the system \eqref{CEM} is well-known \cite{PD}, here we re-derive the result as an application of the framework developed in section \ref{Sec:CKinstantons}. This illustrates that one actually does not need to solve the full Einstein equations as in \cite{PD}, but just $R=4\lambda$. This example also allows us to give a trick to solve the modified Toda equation.

\begin{proposition}\label{Result:CEMPD}
The metric \eqref{PDmetric} satisfies the cosmological Einstein-Maxwell equations \eqref{CEM} if and only if the functions $P$ and $Q$ are given by
\begin{equation}\label{CEMPD}
\begin{aligned}
 P ={}& a_{0}+a_{1}p+a_{2}p^{2}+a_{3}p^{3}+a_{4}p^{4}, \\
 Q={}& (a_{0}+\tfrac{1}{3}\lambda)+a_{1}q+a_{2}q^{2}+a_{3}q^{3}+(a_{4}-\tfrac{1}{3}\lambda)q^{4},
\end{aligned}
\end{equation}
where $a_{0},...,a_{4}$ are arbitrary constants. 
\end{proposition}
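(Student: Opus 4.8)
The plan is to reduce the cosmological Einstein-Maxwell system \eqref{CEM} for the ansatz \eqref{PDmetric} to the single scalar condition $R=4\lambda$, exactly as advertised in the preamble to the proposition. By the discussion following \eqref{SDMaxwell}, since the metric \eqref{PDmetric} is (ambi-)K\"ahler with $J$-invariant Ricci tensor and both $\xi^a_{\pm}$ are Killing, the full system \eqref{CEM} is equivalent to the constancy of the scalar curvature $R=4\lambda$, with the Maxwell field then automatically given by \eqref{SDMaxwell} (whose SD part is supplied by \eqref{TraceFreeRF}). So the entire problem collapses to computing $R$ from \eqref{Todag} using the Toda variables \eqref{TVPD} and imposing $R=4\lambda$.

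First I would compute $R$ explicitly. Using the formula \eqref{Todag}, $R=-\frac{1}{We^u}[u_{xx}+u_{yy}+(e^u)_{zz}]$, with $\partial_{y}$ Killing so that $u_{yy}=0$, and with the variables $z,W,e^u$ and vector fields $\partial_x,\partial_z$ read off from \eqref{TVPD}--\eqref{VFPD} (taking, say, the $-$ orientation). Rather than attacking $u_{xx}+(e^u)_{zz}=0$ by brute force, I would reuse the auxiliary-variable trick \eqref{trickKN}--\eqref{trickPD}: introduce $\sigma$ with $u_x=\sigma_z$, $(e^u)_z=-\sigma_x$, translate these into equations for $\partial_p\sigma,\partial_q\sigma$ in terms of $\dot P,\dot Q$ exactly as in the proof of Theorem \ref{Result:SDPD}, and then impose the modified equation $R=4\lambda$ instead of $R=0$. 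The net effect of the cosmological constant is to shift the right-hand side; integrability $\partial_p\partial_q\sigma=\partial_q\partial_p\sigma$ will again produce a separated ODE. I expect the separation of variables to force $P$ to be a quartic in $p$ and $Q$ a quartic in $q$, with the constant $\lambda$ appearing only in the $q$-independent ($q^0$) and top ($q^4$) coefficients of $Q$, precisely matching \eqref{CEMPD}.

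Concretely, I anticipate the analogue of \eqref{ConstSDPD} to pick up an inhomogeneous $\lambda$-term, so that after applying $\partial_q^2$ and $\partial_p^2$ the homogeneous structure $q^3\frac{\d^2}{\d q^2}(\ddot Q/q)=p^3\frac{\d^2}{\d p^2}(\ddot P/p)$ is recovered up to the $\lambda$ shift; this yields quartics and then the coefficient-matching fixes $Q$'s zeroth and fourth coefficients to be $a_0+\tfrac{1}{3}\lambda$ and $a_4-\tfrac{1}{3}\lambda$ while all middle coefficients agree with those of $P$. The cleanest check is to verify directly that the stated $P,Q$ give $R=4\lambda$ via \eqref{RPPEM}-type algebra, which establishes sufficiency; necessity follows from the separation argument. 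For the converse direction I would then note that \eqref{SDMaxwell} produces a closed and co-closed $F$ by construction (the ASD part $F^-=\Omega^2\kappa$ is closed since $z^{-2}\kappa$ is closed and the SD part follows from \eqref{TraceFreeRF} with $R=4\lambda$), completing the equivalence.

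The main obstacle will be bookkeeping in the application of the trick: passing from \eqref{trickPD} to the separated equations for $\partial_p\sigma,\partial_q\sigma$ requires carefully tracking how $\lambda$ enters through the relation $R=4\lambda$ rather than $\hat R_-=0$, and ensuring that the $\lambda$-dependent source term separates into pure $p$- and pure $q$-pieces (so that it can be absorbed into the individual polynomials $P$ and $Q$). Verifying that the cosmological term attaches \emph{only} to the $q^0$ and $q^4$ coefficients of $Q$ (and not symmetrically to $P$) reflects the asymmetry $z_-=\frac{1-pq}{p-q}$, $\Omega_-=\frac{p-q}{1+pq}$, and confirming this asymmetry via the curvature computation is the one genuinely delicate step; everything else is a direct specialization of the already-developed machinery of sections \ref{Sec:CKinstantons} and \ref{Sec:PD}.
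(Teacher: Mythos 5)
Your overall strategy matches the paper's: reduce \eqref{CEM} to the single scalar condition $R=4\lambda$ (equivalently the modified Toda equation $u_{xx}+(e^u)_{zz}=-4\lambda We^u$, by \eqref{Todag} and the Killing property of $\partial_y$), then separate variables via an auxiliary potential and the cross-derivative identity. However, there is one genuine gap, and it sits exactly at the point you flag as ``the main obstacle'' without resolving it. The substitution $u_x=\sigma_z$, $(e^u)_z=-\sigma_x$ is an overdetermined first-order system for $\sigma$ whose integrability condition $\partial_x\sigma_z=\partial_z\sigma_x$ is precisely the \emph{unmodified} Toda equation $u_{xx}+(e^u)_{zz}=0$. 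So you cannot ``introduce $\sigma$ by \eqref{trickPD} and then impose $R=4\lambda$ instead of $R=0$'': for $\lambda\neq0$ no such $\sigma$ exists, and the subsequent derivation of $\partial_p\sigma$, $\partial_q\sigma$ and the coefficient-matching would silently reproduce the $\lambda=0$ answer. The missing idea is the modification used in the paper, eq.\ \eqref{trick2PD}: set $u_x=\sigma_z+T$, $(e^u)_z=-\sigma_x$, so that the modified Toda equation becomes the transport equation $T_x=-4\lambda We^u$, which in the $(p,q)$ variables is the explicitly integrable equation \eqref{EqForT}. The function $T$ then propagates into the expressions for $\partial_p\sigma$ and $\partial_q\sigma$, and it is the combination of $\partial_p\partial_q\sigma=\partial_q\partial_p\sigma$ \emph{with} \eqref{EqForT} that produces the inhomogeneous constraint \eqref{ConstEMPD} with its source $4\lambda(1-p^2q^2)$; applying $\partial_q^2\partial_p^2$ then gives $\ddddot{Q}-\ddddot{P}=-8\lambda$ and hence the asymmetric shift of only the $q^0$ and $q^4$ coefficients in \eqref{CEMPD}.

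Two smaller remarks. First, your fallback of verifying sufficiency by directly checking $R=4\lambda$ for the stated $P,Q$ is sound and would salvage half the statement, but necessity (the ``only if'') really does require the separation argument, hence the $T$-modification above. Second, your anticipated separated ODE $q^3\frac{\d^2}{\d q^2}(\ddot{Q}/q)=p^3\frac{\d^2}{\d p^2}(\ddot{P}/p)$ is the structure of the conformally self-dual case (Theorem \ref{Result:SDPD}, where the relevant variable is $\hat z_-$ and the weights involve $1\pm pq$); in the Einstein--Maxwell case the analogous identity \eqref{ConstEMPD} is built from powers of $(p-q)$ and collapses directly to $\ddddot{Q}-\ddddot{P}=-8\lambda$. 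This does not affect the conclusion but indicates the bookkeeping you deferred is not a straight copy of the earlier proof.
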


\begin{proof}
For concreteness, we choose to work with the ASD side, and we denote $x\equiv x_{-}$, $y\equiv y_{-}$, $z\equiv z_{-}$, $u\equiv u_{-}$, $W\equiv W_{-}$. 
Since the ansatz \eqref{PDmetric} is conformally K\"ahler with symmetry, the cosmological Einstein-Maxwell equations reduce to
\begin{align}
 u_{xx}+(e^{u})_{zz}=-4\lambda We^{u} \label{MTPD}
\end{align}
(as $\partial_{y}$ is Kiling). To solve the modified Toda equation \eqref{MTPD}, we use a slight variation of the trick used in \eqref{trickPD}: we introduce two variables $\sigma,T$ by
\begin{align}
 u_{x} = \sigma_{z} + T, \qquad (e^{u})_{z} = -\sigma_{x}. \label{trick2PD}
\end{align}
Equation \eqref{MTPD} becomes $T_{x}=-4\lambda We^{u}$, and, using \eqref{VFPD}, this gives
\begin{align}
(1-p^2)\partial_{p}T+(1-q^2)\partial_{q}T = -4\lambda\frac{(1-p^2q^2)}{(p-q)^2}. \label{EqForT}
\end{align}
Equations \eqref{trick2PD} lead to
\begin{align*}
\tfrac{(1-p^2)}{(p-q)^2}Q\dot{P}+\tfrac{(1-q^2)}{(p-q)^2}P\dot{Q}+\tfrac{4(p+q)}{(p-q)^2}PQ ={}& (1-q^2)P \partial_{p}\sigma + (1-p^2)Q \partial_{q}\sigma+\tfrac{F}{(p-q)^2}T, \\
\tfrac{(1-q^2)}{(p-q)^2}\dot{P}+\tfrac{(1-p^2)}{(p-q)^2}\dot{Q}-\tfrac{4(1-q^2)}{(p-q)^3}P+\tfrac{4(1-p^2)}{(p-q)^3}Q ={}& (1-p^2)\partial_{p}\sigma + (1-q^2)\partial_{q}\sigma,
\end{align*}
where $F=(1-p^2)^2Q-(1-q^2)^2P$. Proceeding as in the proof of Theorem \ref{Result:SDPD}, we now arrive at the system
\begin{align*}
 \frac{\partial\sigma}{\partial{q}} = \frac{\dot{P}}{(p-q)^2} - \frac{4P}{(p-q)^3} - \frac{(1-p^2)}{(p-q)^2}T, \qquad
 \frac{\partial\sigma}{\partial{p}} = \frac{\dot{Q}}{(p-q)^2} + \frac{4Q}{(p-q)^3}+\frac{(1-q^2)}{(p-q)^2}T.
\end{align*}
Using the identity $\partial_{p}\partial_{q}\sigma=\partial_{q}\partial_{p}\sigma$ and eq. \eqref{EqForT}, we get
\begin{align}
(p-q)^{2}\ddot{Q}+6(p-q)\dot{Q}+12Q=(p-q)^{2}\ddot{P}-6(p-q)\dot{P}+12P +4\lambda(1-p^2q^2).
\label{ConstEMPD}
\end{align}
Applying $\partial^{2}_{q}$ and then $\partial^{2}_{p}$ we are led to
\begin{align}
 \ddddot{Q} - \ddddot{P} = -8\lambda.
\end{align}
Taking additional derivatives $\partial_{p}$ and $\partial_{q}$, and using that $P$ and $Q$ depend only on $p$ and $q$ respectively, we see that $P$ and $Q$ must be fourth order polynomials, $P=\sum_{i=0}^{4}a_ip^i$, $Q=\sum_{i}^{4}b_iq^i$. Replacing back in \eqref{ConstEMPD}, we get $b_0=a_0+\frac{1}{3}\lambda$, $b_1=a_1$, $b_2=a_2$, $b_3=a_3$, $b_4=a_4-\frac{1}{3}\lambda$, so the result \eqref{CEMPD} follows.
\end{proof}

Using \eqref{Psi2AK}, we find: 
\begin{align}
 \frac{W_{0}^{\pm}}{W_{\pm}} ={}& \frac{(a_{3}\pm a_{1})}{2} - (a_0-a_4+\frac{1}{3}\lambda)\left(\frac{p+q}{1\mp pq}\right) + \frac{\lambda}{3}\left(\frac{1\pm pq}{p-q}\right)^{3}, \\
 \Psi^{\pm}_{2} ={}& -\frac{(a_3\pm a_1)}{2}\left(\frac{p-q}{1\pm pq}\right)^{3} + 
 (a_0-a_4+\frac{1}{3}\lambda)\left(\frac{p+q}{1\mp pq}\right)\left(\frac{p-q}{1\pm pq}\right)^{3}.
\end{align}
From these formulas we see that the conformally SD limit $\Psi_{2}^{-}=0$ corresponds to $a_{3}=a_{1}$ and $a_{0}-a_{4}+\frac{\lambda}{3}=0$, which implies $\frac{W_{0}^{\pm}}{W_{\pm}}=\frac{\lambda}{3}z_{\pm}^{3}$. Using then Theorem \ref{Result:Ricci}, in this limit we get $\rho=\lambda\kappa$, so the space is Einstein. Thus, the conformally SD limit of the standard Pleba\'nski-Demia\'nski solution \eqref{CEMPD} is indeed different from the generalisation found in Theorem \ref{Result:SDPD}.

\section{The Chen-Teo class}
\label{Sec:ChenTeo}

\subsection{$SU(\infty)$ Toda formulation}

Consider the 4-dimensional family of metrics given in local coordinates $(\tau,\phi,x_{1},x_{2})$ by 
\begin{align}
 g = \frac{(F\d\tau+G\d\phi)^2}{(x_1-x_2)HF} + \frac{kH}{(x_1-x_2)^3}
 \left(\frac{\d{x}_1^2}{X_1}-\frac{\d{x}_2^2}{X_2}-\frac{X_1X_2}{kF}\d\phi^2 \right), 
 \label{ChenTeo}
\end{align}
where $k$ is a constant, $G(x_1,x_2),H(x_1,x_2),X_1(x_1),X_2(x_2)$ are arbitrary functions of their arguments, and 
\begin{align}
 F = x_2^2 X_1 - x_1^2 X_2.
\end{align}
The vector fields $\partial_{\tau},\partial_{\phi}$ are Killing. 
For a specific choice of the functions $G,H,X_1,X_2$, the metric \eqref{ChenTeo} is the Ricci-flat Chen-Teo geometry, see \cite[Eq. (2.1)]{ChenTeo2} \footnote{To compare our notation to that of \cite{ChenTeo2}, set $X_1\equiv X$, $X_2\equiv Y$, $x_1\equiv x$, $x_2\equiv y$.}.

Let $c$ be an arbitrary constant, and define new variables 
\begin{equation}
\begin{aligned}
& W:=\frac{k}{c^2}\frac{(x_1-x_2)H}{F}, \qquad \psi:=\frac{\sqrt{k}}{c}\tau, \qquad y:=\frac{c}{\sqrt{k}}\phi, 
\qquad \tilde{G}:=\frac{k}{c^2}\frac{G}{F}, \qquad A:=\tilde{G}\d{y} \\
& \d{x} := c\left[\frac{x_1}{X_1}\d{x}_1-\frac{x_2}{X_2}\d{x}_2 \right], \qquad 
\d{z} := c\frac{(x_2\d{x}_1-x_1\d{x_2})}{(x_1-x_2)^2}, \qquad e^{u}:=\frac{-X_1X_2}{(x_1-x_2)^4}.
\label{TVCT}
\end{aligned}
\end{equation}
Then a calculation shows that \eqref{ChenTeo} becomes 
\begin{align}
 g = W^{-1}(\d\psi+A)^2+W[\d{z}^2+e^{u}(\d{x}^2+\d{y}^2)]. \label{gTCT}
\end{align}
The Killing fields are now $\partial_{\psi},\partial_{y}$.

\begin{remark}[The Chen-Teo parameter $\nu$]
\label{Remark:nu}
From the expression for $\d{z}$ in \eqref{TVCT} we can find $z$ by integration: the solution is $z = \frac{cx_2}{x_2-x_1} + \nu$, where $\nu$ is an arbitrary constant. We are free to choose any relation between $c$ and $\nu$ we want; in particular, setting
\begin{align}
 c\equiv -(1+\nu), \label{nuCT}
\end{align}
we get $z = \frac{\nu x_{1}+x_{2}}{x_{1}-x_{2}}$, which, in the Ricci-flat Chen-Teo case, is the (inverse of the) conformal factor that makes the metric K\"ahler. The parameter $\nu$ is particularly important in the Ricci-flat case \cite{ChenTeo2}: the Chen-Teo solution is a one-parameter ($-1\leq\nu\leq1$) family of metrics interpolating between the Pleba\'nski-Demia\'nski ($\nu=1$) and Gibbons-Hawking ($\nu=-1$) spaces.
\end{remark}

The fact that the metric \eqref{ChenTeo} can be written as \eqref{gTCT} does not imply, of course, that the geometry \eqref{ChenTeo} is necessarily conformally K\"ahler.  To investigate this, we choose the coframe $\beta^0=W^{-1/2}(\d\psi+A)$, $\beta^1=\sqrt{W}\d{z}$, $\beta^2=\sqrt{W}e^{u/2}\d{x}$, $\beta^{3}=\sqrt{W}e^{u/2}\d{y}$ for \eqref{gTCT}. The 2-form $\kappa=\beta^0\wedge\beta^1+\beta^2\wedge\beta^3$ is equal to \eqref{F2form-gen}, and defines the almost-complex structure $J^{a}{}_{b}=\kappa_{bc}g^{ca}$. The type-$(1,0)$ eigenspace is spanned by $\ell=\frac{1}{\sqrt{2}}(\beta^0+\i\beta^1)$, $m=\frac{1}{\sqrt{2}}(\beta^2+\i\beta^3)$. In particular, the following are type-(1,0) forms:
\begin{align}
 \omega^{0}=\d\psi+A+\i W\d{z}+B(\d{x}+\i\d{y}), \qquad \omega^{1}=\d{x} + \i\d{y}
 \label{JCT}
\end{align}
where $B$ is an arbitrary complex function. Since $\d\omega^1=0$, we see that $J$ will be integrable if $\d\omega^0=0$. This gives the Hermitian condition, and, assuming that it holds, the conformally K\"ahler condition is $\d(z^{-2}\kappa)=0$. These two conditions lead respectively to:
\begin{subequations}\label{condCKCT}
\begin{align}
 Z_{1} :={}& \tilde{G}_{z} - W_{x}= 0, \label{HermitianCT} \\
 Z_{2} :={}& \tilde{G}_{x}+z^2\partial_{z}\left( \tfrac{We^u}{z^2} \right)=0 \label{CKCT}
\end{align}
\end{subequations} 
(recall \eqref{monopoleEq}).
The geometry \eqref{ChenTeo} will be conformally K\"ahler (for the given choice of almost-complex structure \eqref{JCT}) iff the conditions \eqref{condCKCT} are satisfied. The vector field \eqref{vectorCKY} is the Killing vector $\partial_{\psi}$. We also find
\begin{equation}
\begin{aligned}
\partial_{x} ={}& -\frac{X_{1}X_{2}}{cF}(x_{1}\partial_{x_{1}}+x_{2}\partial_{x_{2}}), \\ 
\partial_{z} ={}& \frac{(x_{1}-x_{2})^2}{cF}(x_{2}X_{1}\partial_{x_{1}}+x_{1}X_{2}\partial_{x_{2}}). 
\label{VFCT}
\end{aligned}
\end{equation}

\begin{remark}
To have some intuition about \eqref{condCKCT}, we express $Z_1$ in terms of the original variables:
\begin{align}
Z_{1}=\frac{k(x_1-x_2)X_1X_2}{c^2F}\left[ \partial_{x_2}\left( \frac{x_1G}{X_1F}+\frac{x_2H}{(x_1-x_2)F}\right) + \partial_{x_1}\left( \frac{x_2G}{X_2F}+\frac{x_1H}{(x_1-x_2)F} \right) \right].
\end{align}
Then, for the original Chen-Teo Ricci-flat metric \cite{ChenTeo2}, using \cite[Eqs. (3.7a)-(3.7b)]{AkAnd} we see that indeed $Z_1=0$, which justifies our choice of almost-complex structure \eqref{JCT} for the general class \eqref{ChenTeo}. (In the Ricci-flat case, $Z_2=0$ follows form $Z_1=0$.)
\end{remark}

\subsection{A family of conformally self-dual Chen-Teo metrics}

\begin{theorem}\label{Result:SDCT}
Assume the family of metrics \eqref{ChenTeo} to be conformally K\"ahler w.r.t the almost-complex structure \eqref{JCT} (that is, conditions \eqref{condCKCT} are satisfied). Choose the relation \eqref{nuCT} between the parameters $c$ and $\nu$. Then \eqref{ChenTeo} is conformally self-dual $*C_{abcd} = C_{abcd}$ if and only if the functions $X_1,X_2$ are given by
\begin{equation}\label{SDCT}
\begin{aligned}
 X_1 ={}& a_0+a_1x_1+a_2x_1^2+a_3x_1^3+a_4x_1^4, \\
 X_2 ={}& a_0\nu^2-a_1\nu x_2 +a_2x_2^2-\tfrac{a_3}{\nu}x_2^3+\tfrac{a_4}{\nu^2}x_2^4.
\end{aligned}
\end{equation}
\end{theorem}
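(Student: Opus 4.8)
The plan is to follow the strategy of the proof of Theorem~\ref{Result:SDPD} for the Pleba\'nski-Demia\'nski class, adapted to the Chen-Teo variables. Since the metric is assumed conformally K\"ahler with respect to \eqref{JCT}, formula \eqref{Psi2} gives $\Psi_2=\Omega^2\hat R/12$, so conformal self-duality $*C_{abcd}=C_{abcd}$ (i.e. $\Psi_2=0$) is equivalent to the vanishing of the scalar curvature $\hat R$ of the K\"ahler metric $\hat g=\Omega^2 g$. Because $\partial_y$ is Killing and all the Toda data depend only on $(x_1,x_2)$, the field $\hat u$ is independent of $y$, and \eqref{Todaghat} reduces $\hat R=0$ to the symmetry-reduced Toda equation
\begin{align}
 \hat u_{xx}+(e^{\hat u})_{\hat z\hat z}=0, \qquad \hat u=u-4\log z, \quad \hat z=-\tfrac1z .
\end{align}
First I would record the two facts that make the Chen-Teo case tractable: with the choice \eqref{nuCT}, the conformal factor $z=(\nu x_1+x_2)/(x_1-x_2)$ from Remark~\ref{Remark:nu} is a fixed function of $(x_1,x_2)$ \emph{independent} of the unknowns $X_1,X_2$, and hence, using $e^{\hat u}=e^{u}/z^4$ from \eqref{hattedvariables} together with \eqref{TVCT}, one has the clean expression $e^{\hat u}=-X_1X_2/(\nu x_1+x_2)^4$.

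Next I would solve the Toda equation with the auxiliary-variable trick \eqref{trickPD}: introduce $\hat\sigma$ through $\hat u_x=\hat\sigma_{\hat z}$ and $(e^{\hat u})_{\hat z}=-\hat\sigma_x$, whose compatibility is exactly the Toda equation. Using the explicit vector fields \eqref{VFCT} and $\partial_{\hat z}=z^2\partial_z$, together with the simplification $\hat u_x=\partial_x u$ (valid since one checks $\partial_x z=0$), these two relations become a pair of first-order PDEs expressing $\partial_x u$ and $z^2\partial_z(e^{\hat u})$ in terms of $X_1,X_2$ and their derivatives on one side, and of $\partial_{x_1}\hat\sigma,\partial_{x_2}\hat\sigma$ on the other (a dot denoting differentiation with respect to the relevant single variable). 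As in the Pleba\'nski-Demia\'nski computation, I would then solve this linear $2\times2$ system for $\partial_{x_1}\hat\sigma$ and $\partial_{x_2}\hat\sigma$ separately; I expect the characteristic ``crossed'' structure in which $\partial_{x_1}\hat\sigma$ is expressed through $X_2,\dot X_2$ and $\partial_{x_2}\hat\sigma$ through $X_1,\dot X_1$, with the couplings carried by the common factors $(\nu x_1+x_2)$ and $(x_1-x_2)$.

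I would then impose the integrability condition $\partial_{x_1}\partial_{x_2}\hat\sigma=\partial_{x_2}\partial_{x_1}\hat\sigma$. This should collapse to a single ``balance'' identity, the analogue of \eqref{ConstSDPD}, of the form $\mathcal L_{x_1}[X_1]=\mathcal L_{x_2}[X_2]$ with second-order linear operators whose coefficients are polynomial in $x_1$ (resp. $x_2$) and in $\nu$. Applying $\partial_{x_2}^2$ and then $\partial_{x_1}^2$ (exactly as the $\partial_q^2,\partial_p^2$ step in the proof of Theorem~\ref{Result:SDPD}) should eliminate the coupling and leave a function of $x_1$ equal to a function of $x_2$, which must then be constant; this forces $X_1$ and $X_2$ to be quartic polynomials. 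Substituting the quartics back into the balance identity and matching coefficients then pins down the relations between the coefficients, which I expect to take the symmetric form $X_2(x_2)=\nu^2X_1(-x_2/\nu)$, i.e. precisely \eqref{SDCT}.

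The main obstacle will be the algebraic weight of the elimination and integrability steps: unlike the Pleba\'nski-Demia\'nski case, the coupling function $F=x_2^2X_1-x_1^2X_2$ mixes $X_1$ and $X_2$, and the conformal factor is a nontrivial rational function of $(x_1,x_2)$, so the intermediate expressions are considerably heavier. The decisive simplification that keeps the computation finite is the choice \eqref{nuCT}, which renders $z$ independent of $X_1,X_2$ and makes $e^{\hat u}$ a simple rational function; without it the separation of variables would not go through cleanly. Finally, for the converse (``if'') direction I would note that every step above is reversible: substituting the polynomials \eqref{SDCT} makes the system for $\hat\sigma$ integrable and hence produces a solution of the reduced Toda equation, so that $\hat R=0$ and the metric is conformally self-dual.
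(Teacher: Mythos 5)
Your proposal follows essentially the same route as the paper's proof: reduce conformal self-duality to the symmetry-reduced Toda equation for $\hat u$, introduce the potential $\hat\sigma$ via $\hat u_x=\hat\sigma_{\hat z}$, $(e^{\hat u})_{\hat z}=-\hat\sigma_x$, solve the resulting linear system for $\partial_{x_1}\hat\sigma,\partial_{x_2}\hat\sigma$, impose $\partial_{x_1}\partial_{x_2}\hat\sigma=\partial_{x_2}\partial_{x_1}\hat\sigma$ to get the balance identity (the analogue of \eqref{ConstSDPD}), apply $\partial_{x_2}^2\partial_{x_1}^2$ to force $X_1,X_2$ to be quartics, and match coefficients to land on \eqref{SDCT}. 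The anticipated crossed structure, the role of \eqref{nuCT} in making $z$ and $e^{\hat u}$ clean rational functions of $(x_1,x_2)$, and the reversibility argument for the ``if'' direction all agree with what the paper actually does.
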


\begin{proof}
The proof is very similar to the proof of Theorem \ref{Result:SDPD}. We use that $*C_{abcd}=C_{abcd}$ is equivalent to $\hat{R}=0$, where $\hat{R}$ is given by \eqref{Todaghat}: $\hat{u}_{xx}+(e^{\hat{u}})_{\hat{z}\hat{z}}=0$, with $\hat{u}=u-4\log z$, $\hat{z}=-1/z$ (we used that $\partial_{y}$ is Killing). Introducing $\hat\sigma$ by $\hat{u}_x=\hat{\sigma}_{\hat{z}}$, $(e^{\hat{u}})_{\hat{z}}=-\hat{\sigma}_x$, and using that the vector fields $\partial_{x},\partial_{z}$ are given by \eqref{VFCT}, we are led to 
\begin{align*}
\frac{x_{1}X_{2}\dot{X}_{1}}{(\nu x_{1}+x_{2})^{2}} + \frac{x_{2}X_{1}\dot{X}_{2}}{(\nu x_{1}+x_{2})^{2}} - \frac{4X_{1}X_{2} }{(\nu x_{1}+x_{2})^{2}}
={}& -x_{2}X_{1}\frac{\partial\hat{\sigma}}{\partial{x}_{1}} - x_{1}X_{2}\frac{\partial\hat\sigma}{\partial{x}_{2}}, \\
\frac{x_{2}\dot{X}_{1}}{(\nu x_{1}+x_{2})^{2}}+\frac{x_{1}\dot{X}_{2}}{(\nu x_{1}+x_{2})^{2}} 
- \frac{4\nu x_{2}X_{1}}{(\nu x_{1}+x_{2})^{3}} - \frac{4 x_{1} X_{2}}{(\nu x_{1}+x_{2})^{3}}
={}& -x_{1}\frac{\partial\hat{\sigma}}{\partial{x}_{1}} - x_{2}\frac{\partial\hat\sigma}{\partial{x}_{2}},
\end{align*}
where $\dot{X}_{1}\equiv\frac{\d{X}_{1}}{\d{x}_{1}}$, $\dot{X}_{2}\equiv\frac{\d{X}_{2}}{\d{x}_{2}}$.
We then deduce
\begin{align*}
\frac{\partial\hat{\sigma}}{\partial{x}_{2}} = -\frac{\dot{X}_{1}}{(\nu x_{1}+x_{2})^{2}} + \frac{4\nu X_{1}}{(\nu x_{1}+x_{2})^{3}}, \qquad
\frac{\partial\hat{\sigma}}{\partial{x}_{1}} = -\frac{\dot{X}_{2}}{(\nu x_{1}+x_{2})^{2}} + \frac{4 X_{1}}{(\nu x_{1}+x_{2})^{3}}.
\end{align*}
Using $\partial_{x_{1}}\partial_{x_{2}}\hat{\sigma}=\partial_{x_{2}}\partial_{x_{1}}\hat{\sigma}$, we find
\begin{align}
(\nu x_{1}+x_{2})^{2}\ddot{X}_{1}-6(\nu x_{1}+x_{2})\dot{X}_{1}+12\nu^2 X_{1} 
= (\nu x_{1}+x_{2})^{2}\ddot{X}_{2}-6(\nu x_{1}+x_{2})\dot{X}_{2}+12 X_{2}. \label{ConstSDCT}
\end{align}
Applying $\partial^{2}_{x_{2}}\partial^{2}_{x_{1}}$, we get 
\begin{align}
 \ddddot{X}_{1} = \nu^{2}\ddddot{X}_{2},
\end{align}
which implies that $X_{1}$ and $X_{2}$ are fourth order polynomials in $x_{1}$ and $x_{2}$ respectively. Replacing in \eqref{ConstSDCT}, we find relations between the coefficients and we get \eqref{SDCT}.
\end{proof}

Similarly to Theorem \ref{Result:SDPD}, the solution \eqref{SDCT} can be regarded (locally) as a generalisation of the Ricci-flat Chen-Teo metric to a self-dual gravitational instanton in conformal gravity. However, unlike \eqref{PQSD}, the conformally self-dual equation now determines $X_1,X_2$ in \eqref{ChenTeo} to be given by \eqref{SDCT}, but it does not determine the other arbitrary functions $G,H$ in \eqref{ChenTeo}. These are constrained by the conformally K\"ahler condition \eqref{condCKCT}, but this restriction does not determine $G,H$ uniquely. 
This situation is analogous to what we encountered in Theorem \ref{Result:SDPP}.

\subsection{A family of Einstein-Maxwell Chen-Teo metrics}

\begin{theorem}\label{Result:EMCT}
Assume the family of metrics \eqref{ChenTeo} to be conformally K\"ahler w.r.t the almost-complex structure \eqref{JCT} (that is, conditions \eqref{condCKCT} are satisfied). Choose the relation \eqref{nuCT} between the parameters $c$ and $\nu$. Then \eqref{ChenTeo} satisfies the Einstein-Maxwell equations (with $\lambda=0$) if and only if 
\begin{equation}\label{EMCT}
\begin{aligned}
 X_1 ={}& a_0 + a_1 x_1 + a_2 x_1^2 + a_3 x_1^3 + a_4 x_1^4, \\
 X_2 ={}& a_0 + a_1 x_2 + a_2 x_2^2 + a_3 x_2^3 + a_4 x_2^4,
\end{aligned}
\end{equation}
where $a_0,...a_4$ are arbitrary constants.
\end{theorem}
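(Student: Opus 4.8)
The plan is to run exactly the strategy of Theorems \ref{Result:SDPD} and \ref{Result:SDCT}, but with the \emph{unhatted} Toda variables. First I would invoke the framework of section \ref{Sec:CKinstantons}: by hypothesis the conditions \eqref{condCKCT} hold, so \eqref{ChenTeo} is conformally K\"ahler with respect to \eqref{JCT}, and the vector \eqref{vectorCKY} is the Killing field $\partial_\psi$, so the Ricci tensor is $J$-invariant. By the Flaherty--Apostolov--Calderbank--Gauduchon--LeBrun equivalence recalled around \eqref{SDMaxwell}, the cosmological Einstein--Maxwell system \eqref{CEM} is then equivalent to $R=4\lambda$; setting $\lambda=0$ and using \eqref{Todag} together with the fact that $\partial_y$ is Killing (so $u_y=0$), this collapses to the $SU(\infty)$ Toda equation $u_{xx}+(e^u)_{zz}=0$.

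Next I would solve this Toda equation with the auxiliary-variable trick \eqref{trickKN}, now applied to the unhatted $u$ (in contrast to Theorem \ref{Result:SDCT}, where the hatted variables $\hat u,\hat z$ were used, and in contrast to Proposition \ref{Result:CEMPD}, where $\lambda\neq0$ forced the extra field $T$). That is, introduce $\sigma$ by $u_x=\sigma_z$, $(e^u)_z=-\sigma_x$. Feeding in the operators $\partial_x,\partial_z$ from \eqref{VFCT}, the function $e^u=-X_1X_2/(x_1-x_2)^4$ from \eqref{TVCT}, and $z=(\nu x_1+x_2)/(x_1-x_2)$ from \eqref{nuCT}, each of the two defining relations becomes a linear equation for the pair $(\partial_{x_1}\sigma,\partial_{x_2}\sigma)$, with coefficient determinant proportional to $F=x_2^2X_1-x_1^2X_2$. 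Solving the $2\times2$ system and using the explicit form of $F$ to cancel terms, I expect the clean expressions
\begin{align*}
 \partial_{x_1}\sigma = -\frac{\dot X_2}{(x_1-x_2)^2} - \frac{4X_2}{(x_1-x_2)^3}, \qquad
 \partial_{x_2}\sigma = -\frac{\dot X_1}{(x_1-x_2)^2} + \frac{4X_1}{(x_1-x_2)^3},
\end{align*}
where $\dot X_i=\d X_i/\d x_i$.

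Imposing the integrability condition $\partial_{x_1}\partial_{x_2}\sigma=\partial_{x_2}\partial_{x_1}\sigma$ then yields the separated constraint
\begin{align*}
 (x_1-x_2)^2\ddot X_1 - 6(x_1-x_2)\dot X_1 + 12X_1 = (x_1-x_2)^2\ddot X_2 + 6(x_1-x_2)\dot X_2 + 12X_2,
\end{align*}
the $\nu$-independent analogue of \eqref{ConstSDCT}. Applying $\partial_{x_1}^2\partial_{x_2}^2$ collapses both sides (each to $2\ddot X_i$ under the first pair of derivatives, then to $2\ddddot X_i$), giving $\ddddot X_1=\ddddot X_2$; since the left side depends only on $x_1$ and the right only on $x_2$, both equal a common constant, so $X_1$ and $X_2$ are quartics. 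Substituting $X_1=\sum_i a_ix_1^i$ and $X_2=\sum_i b_ix_2^i$ back into the constraint and matching the coefficient of each monomial $x_1^jx_2^k$ forces $a_i=b_i$ for all $i$, which is precisely \eqref{EMCT}.

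The conceptual skeleton is identical to Theorems \ref{Result:SDPD} and \ref{Result:SDCT}, so the one genuine obstacle is the algebraic bookkeeping: solving the $2\times2$ linear system for $(\partial_{x_1}\sigma,\partial_{x_2}\sigma)$ and checking the cancellations --- powered by the explicit form $F=x_2^2X_1-x_1^2X_2$ --- that produce the tidy expressions above. Structurally, the reason the answer is the \emph{symmetric} pair \eqref{EMCT} rather than the $\nu$-skewed pair \eqref{SDCT} is that setting $\lambda=0$ lets us work with the unhatted Toda datum $u$, whose only building block $e^u=-X_1X_2/(x_1-x_2)^4$ is invariant under $(x_1,X_1)\leftrightarrow(x_2,X_2)$ and carries no $\nu$-dependence (the $\nu$ entering through $c$ in \eqref{VFCT} cancels); by contrast, the conformally self-dual computation is forced through the hatted variable $\hat z=-1/z$ with $z=(\nu x_1+x_2)/(x_1-x_2)$, which weights $x_1$ and $x_2$ asymmetrically.
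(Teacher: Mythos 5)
Your proposal is correct and follows essentially the same route as the paper's proof: reduce the $\lambda=0$ Einstein--Maxwell system to the unhatted Toda equation $u_{xx}+(e^u)_{zz}=0$, introduce the auxiliary $\sigma$ with $u_x=\sigma_z$, $(e^u)_z=-\sigma_x$, solve the resulting $2\times2$ linear system for $(\partial_{x_1}\sigma,\partial_{x_2}\sigma)$ to get exactly the expressions you wrote, and then use $\partial_{x_1}\partial_{x_2}\sigma=\partial_{x_2}\partial_{x_1}\sigma$ followed by $\partial_{x_1}^2\partial_{x_2}^2$ to force both $X_i$ to be quartics with matching coefficients. Your separated constraint also quietly corrects a typo in the paper's displayed version of that equation (whose right-hand side should carry $\dot X_1$, not $\dot X_2$), and your closing remark on why the answer is symmetric rather than $\nu$-skewed is a sound observation not made explicit in the paper.
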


\begin{proof}
Recall that the Einstein-Maxwell equations with $\lambda=0$ reduce to $u_{xx}+(e^{u})_{zz}=0$. The procedure to solve this is the same that we used in previous cases: we introduce an auxiliary variable $\sigma$ by $u_x=\sigma_z$, $(e^u)_z=-\sigma_x$. 
Using \eqref{VFCT}, we deduce 
\begin{align*}
& \frac{x_1X_{2}\dot{X}_{1}}{(x_1-x_2)^2}+\frac{x_2X_1\dot{X}_2}{(x_1-x_2)^2}-\frac{4X_1X_2}{(x_1-x_2)^2} = - x_2X_1\frac{\partial\sigma}{\partial{x}_1} - x_1X_2\frac{\partial\sigma}{\partial{x}_2}, \\
& \frac{x_2\dot{X}_{1}}{(x_1-x_2)^2}+\frac{x_1\dot{X}_2}{(x_1-x_2)^2}-\frac{4x_2X_1}{(x_1-x_2)^3}+\frac{4x_1X_2}{(x_1-x_2)^3} = - x_1\frac{\partial\sigma}{\partial{x}_1} - x_2\frac{\partial\sigma}{\partial{x}_2}.
\end{align*}
This leads to
\begin{align*}
 \frac{\partial\sigma}{\partial{x}_2} = -\frac{\dot{X}_1}{(x_1-x_2)^2}+\frac{4X_1}{(x_1-x_2)^3}, \qquad
 \frac{\partial\sigma}{\partial{x}_1} = -\frac{\dot{X}_2}{(x_1-x_2)^2}-\frac{4X_2}{(x_1-x_2)^3}.
\end{align*}
Using then $\partial_{x_1}\partial_{x_2}\sigma=\partial_{x_2}\partial_{x_1}\sigma$, after some calculations we arrive at
\begin{align}
(x_1-x_2)^{2}\ddot{X}_2+6(x_1-x_2)\dot{X}_2+12X_2 = (x_1-x_2)^{2}\ddot{X}_1-6(x_1-x_2)\dot{X}_2+12X_1. 
\label{EqForCT}
\end{align}
Applying $\partial_{x_1}^{2}\partial_{x_2}^{2}$, we get $\ddddot{X}_{1}=\ddddot{X}_{2}$, which implies that $X_{1},X_{2}$ are fourth order polynomials, and replacing in \eqref{EqForCT} we see that they must have the same coefficients.
\end{proof}

The previous result shows that we now encounter a new phenomenon for Einstein-Maxwell instantons, not present in the Pleba\'nski-Demia\'nski and previous cases: by construction, the conformal K\"ahler assumption and the fact that \eqref{EMCT} solve the $SU(\infty)$ Toda equation guarantee that the ansatz \eqref{ChenTeo} is automatically a solution to the Einstein-Maxwell system, with the Maxwell field given by \eqref{SDMaxwell} and \eqref{TraceFreeRF} (or \eqref{RicciFormFormula2}), but the functions $G$ and $H$ in \eqref{ChenTeo}, while constrained by \eqref{condCKCT}, remain undetermined. Since $u$ {\em is} determined, one can use that $H$ is related to $W$ by \eqref{TVCT} and that $W$ must satisfy 
\begin{align}
 W_{xx} + \partial_{z} \left[ z^2 \partial_{z}(\tfrac{We^u}{z^2}) \right] = 0 \label{EqForWCT}
\end{align}
(recall \eqref{EqForW}), to solve this equation for $H$, and then use this to find $G$ via \eqref{condCKCT}. However, the solution to this system is not unique. The situation can be compared to the Gibbons-Hawking ansatz, where instead of \eqref{EqForWCT} one has the Laplace equation and instead of \eqref{monopoleEq} one has the standard monopole equation $\d{A}=*_{3}\d{V}$ (see e.g. \cite{DunajskiBook}), but these equations do not have a unique solution. Our construction thus provides a {\em family} of Einstein-Maxwell Chen-Teo geometries, in a similar sense as the Gibbons-Hawking ansatz provides a family of hyper-K\"ahler geometries. 

By contrast, the Ricci-flat condition for the ansatz \eqref{ChenTeo} gives the same solution \eqref{EMCT} but it also determines $H$ and $G$, since $W$ must be given by \eqref{Wring}, and then one finds $H$ from \eqref{TVCT} and $G$ from \eqref{condCKCT}. We have checked that in this way one indeed recovers the original Ricci-flat Chen-Teo metric \cite[Eq. (2.1)]{ChenTeo2}. In the Einstein case with $\lambda\neq0$, $W$ is given by \eqref{Wlambda}, and one must solve the modified Toda equation. This can be done with the same trick that we used in the Pleba\'nski-Demia\'nski case (Theorem \ref{Result:CEMPD}); the result will be presented in a separate work.

\paragraph{Acknowledgements.} 
I am very grateful to Maciej Dunajski and Paul Tod for very helpful conversations about the topics of this work, and to the Institut Mittag-Leffler in Djursholm, Sweden for hospitality during the conference ``Einstein Spaces and Special Geometry'' in July 2023. I am also especially grateful to an anonymous Referee for many useful comments, suggestions of appropriate bibliographic references, and recommendations that led to a substantial improvement in the presentation of this work. Finally I would also like to thank the Alexander von Humboldt Foundation and the Max Planck Society for financial support during the completion of this work.

\paragraph{Conflict of interest.}
The author declares that there is no conflict of interest.

\end{document}